\theoremstyle{definition}
\newtheorem{theorem}{Theorem}[section]
\newtheorem{lemma}{Lemma}[section]
\newtheorem{corollary}{Corollary}[section]
\newtheorem{proposition}{Proposition}[section]
\newtheorem{definition}{Definition}[section]
\newcommand\vhdef2
\newcommand{\class}[1]{{\ifnum\vhdef=2\mathbf{#1}\else\mathrm{#1}\fi}}
\newcommand{\co}{
\ifnum\vhdef=2\mathbf{co\,}\else\mathrm{co\hspace{2pt}}\fi
\ifnum\vhdef=2\textbf{-}\else\textrm{-}\fi
}
\newcommand{\lang}[1]{\mathtt{#1}}
\newcommand{\LEAFD}{\lang{LEAFD}}
\newcommand{\RLEAFD}{\lang{RLEAFD}}
\newcommand{\ARROWS}{\lang{ARROWS}}
\newcommand{\BROUWERD}{\lang{2D-BROUWER}}
\newcommand{\FNP}{\class{FNP}}
\newcommand{\TFNP}{\class{TFNP}}
\newcommand{\PPAD}{\class{PPAD}}
\newcommand{\NP}{\class{NP}}
\newcommand{\comment}[1]{}
\newcommand{\CSP}{{\rm CSP}}
 \newcommand{\al}{\leftarrow}
\newcommand{\alu}{\nwarrow}
\newcommand{\au}{\uparrow}
\newcommand{\aru}{\nearrow}
\newcommand{\ar}{\rightarrow}
\newcommand{\ard}{\searrow}
\newcommand{\ad}{\downarrow}
\newcommand{\ald}{\swarrow}
\title{Tree-like resolution complexity of two planar problems}
\author[1]{Dmitry Itsykson\thanks{dmitrits@pdmi.ras.ru, partially supported by the RFBR grant 14-01-00545, 
by the President's grant MK-2813.2014.1,
by the Government of the Russia (grant 14.Z50.31.0030) and by the RAS program of fundamental research}}
\author[2]{Anna Malova\thanks{malova.anya@gmail.com}} 
\author[2]{Vsevolod Oparin\thanks{oparin.vsevolod@gmail.com, partially supported by ANR NAFIT 008-01}}
\author[1]{Dmitry Sokolov\thanks{sokolov.dmt@gmail.com, partially supported by the RFBR grant 12-01-31239-mol-a, 
by the President's grant MK-2813.2014.1,
by the Government of the Russia (grant 14.Z50.31.0030)}}
\affil[1]{St. Petersburg Department of
V.A. Steklov Institute of Mathematics of
the Russian Academy of Sciences
}
\affil[2]{St. Petersburg Academic University of the Russian Academy of Sciences
}
\begin{document}

\maketitle

\begin{abstract}
 
We consider two CSP problems: the first CSP encodes 2D Sperner's lemma for the standard triangulation of the right triangle
on $n^2$ small triangles; the second CSP encodes the fact that it is  impossible to 
match cells of $n \times n$ square to arrows (two horizontal,  two vertical and four diagonal) such that arrows in two cells  
with a common edge differ by at most $45^\circ$, and all arrows on  the boundary of the square do not look outside 
(this fact is a corollary of the Brower's fixed point theorem). 
We prove that the tree-like resolution complexities of these CSPs are $2^{\Theta(n)}$. For Sperner's lemma our result implies
$\Omega(n)$ lower bound on the number of request to colors of vertices that is enough to
make  in order to find a trichromatic triangle; this lower bound was originally proved by
Crescenzi and Silvestri.

CSP based on Sperner's lemma is related with the $\rm PPAD$-complete problem. We show that CSP corresponding to arrows is also 
related with a $\rm PPAD$-complete problem. 

\end{abstract}

\section{Introduction}
\label{intro}

The resolution proof system is one of the simplest and well-studied proof systems. There are well known methods of proving lower and upper bounds on the complexity of several types of formulas. However there are no known universal methods that may be used to determine the asymptotic resolution complexity of a given family of formulas.


Baker \cite{Baker95} extended resolution proof system for constraint satisfaction problems
($\CSP$) under arbitrary alphabets. The resolution proof system is connected with backtracking algorithms (so called DPLL algorithms). Every DPLL algorithm for  $\CSP$ under $k$-element alphabet  creates $k$-ary tree. Every node of this tree corresponds to a variable, and edges, going to children, correspond to  $k$ different substitutions of the variable. Every leaf of the tree contains a constraint  that is falsified by substitution made on the path from the root to that leaf. We call such tree as contradiction search tree. It is well known that the minimal size of 
a contradiction search tree for an unsatisfiable $\CSP$ is equal to the size of the minimal tree-like resolution  proof of $\CSP$.

Every unsatisfiable constraint satisfaction problem $F$ under $k$-element alphabet has three parameters: $S(F)$ is the minimal size of resolution proof of $F$,  $S_T(F)$ is the minimal size of tree-like resolution proof, and $d(F)$ is the minimal depth of contradiction search tree. 
These parameters are connected by trivial inequalities: $S(F) \le S_T(F) \le k^{d(F)}$. 
The paper~\cite{BEGJ00} presents the family of formulas that exponentially separate $S$ and $S_T$, and  paper~\cite{Urq11} gives an example of family $F_n$ such that $S_T(F_n) = O(n)$ and $d(F_n) = \Omega(n / \log n)$. The number $d(F)$ is the worst-case lower bound on the number of requests to the variables of $\CSP$ that is necessary to do in order to find a falsified constraint.

We consider the constraint satisfaction problem that codes 2D Sperner lemma: 
namely is impossible to color vertices of the standard triangulation of the right triangle in three colors
such that vertices of the right triangle are colored in different colors and all vertices on the side are 
colored in two colors and there are no small triangles cored with three different colors. We prove that that size of 
tree-like refutation of this CSP is at least $2^{\Omega(n)}$, where $n$ is the number of points on the side of the triangle.
Previous result by Crescenzi and R. Silvestri \cite{CS98} gives a linear lower bound on the depth of resolution proof for the same CSP.

We also consider the constraint satisfaction problem that codes that it is impossible to match cells of the $n \times n$ square with arrows from the set $\{\al, \alu, \au, \aru, \ar, \ard, \ad, \ald\}$ such that  arrows in every two adjacent cells differ by at most $45^\circ$ and boundary arrows are not directed outside the square. The impossibility is a corollary of the Brower's fixed point theorem. We prove that tree-like resolution complexity of this $\CSP$ is equal to $2^{\Theta(n)}$. 

Our research is motivated by the investigation of complexity classes $\rm PPA, PPAD, PPADS, PPP$ that are subclasses of  
$\rm TFNP$ function problems that are guaranteed to have a solution because of unsatisfiability of certain $\CSP$ \cite{Pap94}.  
It is known that Sperner's Lemma corresponds to $\rm PPAD$-complete problem. 
We show that $\rm CSP$ with arrows also corresponds to $\rm PPAD$-complete problem. 




\section{Preliminaries}

Let $X = \{x_1, x_2, \dots, x_n\}$ be a finite number of variables that can take values from a finite alphabet
$W = \{w_1, w_2, \dots, w_k\}$.
Let $S$ be the set of constraints for $X$: every constraint depends on some subset $X' \subseteq X$ and defines a set of values 
that variables from $X'$ can take simultaneously. 
A constraint satisfaction problem ($\CSP$) is a triplet $\langle X, W, S \rangle$. 
A constraint satisfaction problem is satisfiable if
there is a set of values for $X$ that satisfies all constraints from $S$, otherwise we call $\CSP$ unsatisfiable.

{\em A partial substitution} is a map from $X$ to $W \cup \{*\}$. We say that substitution $\rho$ sets value for variable $x$ if $\rho(x) \neq *$.
A substitution is {\em full} if values of all variables are set.

Substition $\rho$ falsifies constraint $C \in S$ if values of all variables, that constraint $C$ depends on, are defined by $\rho$ and 
constraint $C$ forbids such a setting of values by $\rho$.

{\em A contradiction search tree} for an unsatisfiable $\CSP$ is a rooted $k$-arity tree such that vertices are marked with variables 
and edges, that connect a vertex marked with $x$ to its children, are marked with substitutions $x := w$ for every $w \in W$. 
Every leaf of the tree is marked with a constraint that is refuted by the substitution made along the path from the root to the leaf.

A nogood is a constraint of the following form: $\neg (x_1 = a_1 \wedge \cdots \wedge x_\ell = a_\ell)$, where
$x_1, \dots, x_\ell \in X, a_1, \dots, a_\ell \in W$. In the case of binary alphabet $W = \{0, 1\}$  nogoods are equivalent to clauses. 
If constraint depends on $\ell$ variables, then it can be represented as a conjunction of at most $|W|^\ell$ nogoods.

The resolution proof system can be generalized to $\CSP$ if all its constraints are represented as conjunctions of nogoods.
A resolution proof for some $\CSP$ formula $\phi$ is a sequence of nogoods $C_1, C_2, \dots, C_t$, that ends with empty nogood $\Box$. 
Every nogood is either contained in $\phi$ or can be derived from $k$ previous nogoods by the resolution rule. 
Let $\{N_a\}_{a\in W}$ be the set of nogoods in the following form: $N_a = \neg(x = a \wedge \alpha_a)$ for all $a \in W$. 
Then nogood $\neg (\bigwedge_{a \in W}\alpha_a)$ is a resolvent (a result of resolution rule) of nogoods $\{N_a\}_{a \in W}$. 
A resolution proof is called tree-like if there exists a $k$-ary tree such that it leaves are marked with 
nogoods of initial formula $\phi$ and nogood in any other vertex $v$ is a resolvent of nogoods 
that are written in children of $v$; the root of the tree is marked with the empty nogood $\Box$.

\begin{proposition}[\cite{Baker95}\cite{Hwang04}] 
The size of the minimal tree-like resolution proof for every unsatisfiable $\CSP$ formula $\phi$
is equal to the size of minimal contradiction search tree for $\phi$.
\end{proposition}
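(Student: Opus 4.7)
The plan is to prove the equality by exhibiting two size-non-increasing transformations: one from a tree-like resolution refutation of $\phi$ to a contradiction search tree of no greater size, and one in the reverse direction.

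For the direction from a refutation $T$ to a search tree, I would walk top-down through $T$, maintaining the invariant that the nogood labelling the current proof node is falsified by the partial assignment $\rho$ accumulated along the walk. The root of $T$ is labelled $\Box$, which is falsified by the empty assignment. At an internal proof node whose $k$ children carry nogoods $\{\neg(x = w \wedge \alpha_w)\}_{w \in W}$ and whose resolvent $\neg(\bigwedge_w \alpha_w)$ is falsified by $\rho$, the partial assignment $\rho$ must satisfy every $\alpha_w$; so I query $x$ at the corresponding search-tree node, and whichever value $w^{*}$ is observed, the extended $\rho \cup \{x := w^{*}\}$ falsifies the $w^{*}$-th child, preserving the invariant as I descend. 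At a leaf of $T$ the invariant guarantees that the attached initial nogood is falsified by the current $\rho$, so it serves as the search-tree leaf label. Each node of $T$ is used at most once, so the search tree has at most as many leaves as $T$.

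For the reverse direction, given a search tree $D$ I would attach to each node $v$, by induction from the leaves up, a derived nogood $N_v$ that depends only on variables assigned by the root-to-$v$ path $\rho_v$ and is falsified by $\rho_v$. A leaf $v$ is already labelled by a constraint refuted by $\rho_v$, hence by an initial nogood meeting the invariant. At an internal node $v$ with query variable $x$ and children $v_w$ for $w \in W$, two cases arise: if every $N_{v_w}$ is of the form $\neg(x = w \wedge \alpha_w)$, one $k$-ary resolution step gives $N_v = \neg(\bigwedge_w \alpha_w)$; otherwise some $N_{v_w}$ does not mention $x$, is already falsified by $\rho_v$, and can be promoted to $v$ while the other $k-1$ subtrees are discarded. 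In either case the resolution subtree rooted at $v$ has no more leaves than the corresponding search subtree. At the root the invariant forces $N_v = \Box$, yielding the refutation.

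The main subtlety will be the promotion step in the second construction: one must observe that tree-like resolution does not require the proof tree to mirror the search tree structurally, so discarding subtrees is syntactically legal and only decreases the size. Everything else is a routine bookkeeping induction using the $k$-ary resolution rule exactly as defined in the preliminaries.
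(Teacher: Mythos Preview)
The paper does not supply its own proof of this proposition; it is stated with citations to \cite{Baker95} and \cite{Hwang04} and then used as a black box. So there is no paper proof to compare against.

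Your argument is the standard one and is correct. A couple of small points worth making explicit when you write it up:

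\begin{itemize}
\item In the refutation-to-search-tree direction, you should note what happens if the variable $x$ resolved upon at the current proof node is already set by $\rho$. In that case $\rho$ already falsifies the child $N_{\rho(x)}$, so you descend to that child without creating a new query node. This only makes the resulting search tree smaller, which is the direction you need.
\item In the search-tree-to-refutation direction, the promotion step is fine precisely because you are building a \emph{new} proof tree, not annotating the search tree in place: when some $N_{v_w}$ omits $x$, you simply take the entire resolution subtree already built for $v_w$ as the subtree at $v$, discarding the other $k-1$ branches. No ``copy rule'' is needed.
\item Both directions tacitly assume that the constraints of $\phi$ have already been expanded into nogoods, so that a falsified initial nogood at a refutation leaf is itself a legitimate leaf label in the contradiction search tree. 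This is exactly the setting the paper adopts when it says ``every nogood is either contained in $\phi$ or \ldots'', so the assumption is harmless, but it is worth stating.
\end{itemize}

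With these clarifications your two transformations give the claimed equality.
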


\section{Tree size lower bounds}

Consider $\CSP$ $\phi$ under a finite alphabet $W = \{w_1, w_2, \dots, w_k\}$ that 
depends on variables $X = \{x_1, x_2, \dots, x_n\}$ and consists of constraints $C_1,
C_2, \dots, C_m$.

We consider the following game that will be used for proving lower bounds on the size of contradiction search trees.
A game is defined by an unsatisfiable $\CSP$ formula $\phi$; two players Alice and Bob play as follows.
Alice secretly from Bob chooses a contradiction search tree for $\phi$ and put a token in the root of the tree.
Alice asks Bob about the value of a variable that corresponds a vertex that contains the token.
Bob either returns a value of asked variable or suggest to Alice choose the value by herself from some 
subset of $W$ of cardinality at least 2. In the second case we say that Bob moves ChooseAny. Alice moves the token 
according the Bob's answer, if Bob moves ChooseAny, then Alice chooses a value herself from the set proposed by Bob. 
The game is over if the token is in a leaf, that is a contradiction is found.
The goal of Bob is to maximize the number of ChooseAny answers along the path from the root to a leaf.

\begin{lemma}
\label{decision-tree-lemma}
	Let for $\CSP$ formula $\phi$ there exist such a strategy of Bob that for all strategies of Alice,
    Bob moves ChooseAny at least $t$ times. Then the size of any contradiction search tree for $\phi$ is at least $2^t$.
\end{lemma}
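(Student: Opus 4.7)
The plan is to fix an arbitrary contradiction search tree $T$ for $\phi$ and to show it must have at least $2^t$ leaves by using $T$ itself to define a strategy for Alice, and then counting the number of distinct leaves reached under different responses to Bob's ChooseAny moves.

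First I define Alice's strategy from $T$ in the obvious way: at each step, she asks about the variable labeling the current node of $T$, starting at the root; upon receiving a value $w \in W$ from Bob she moves the token to the child along the edge labeled $x := w$, and upon receiving a ChooseAny answer with an allowed set $A \subseteq W$, $|A| \ge 2$, she picks some $w \in A$ and moves to the corresponding child. Note that under this strategy the variable queried at each step, and the node of $T$ reached at the end, are functions of the sequence of $(x,w)$ pairs produced by the interaction. The game terminates at a leaf of $T$, whose label (a constraint of $\phi$) is falsified by the partial substitution built along the play.

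Next I consider the tree $G$ of all possible executions of the game when Alice uses this strategy and Bob uses his promised strategy that forces at least $t$ ChooseAny moves. In $G$ the only branching happens at ChooseAny moves, where Alice has $|A| \ge 2$ choices; by the hypothesis of the lemma every root-to-leaf branch of $G$ contains at least $t$ such branching points, so $G$ has at least $2^t$ leaves. The key observation is that the natural map sending each leaf of $G$ to the corresponding leaf of $T$ is injective: two distinct executions of $G$ must first differ at some ChooseAny move, and after that step Alice follows two different edges out of the same node of $T$, so the two executions end in disjoint subtrees of $T$ and therefore at different leaves.

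Combining these two steps, $T$ has at least $2^t$ leaves, and hence size at least $2^t$. Since $T$ was an arbitrary contradiction search tree for $\phi$, the lower bound holds for every such tree. The only subtle point is the injectivity argument in the previous paragraph; everything else is bookkeeping on the game tree. This subtlety is mild because $T$ is deterministic once the sequence of assigned values is fixed, so I do not expect any real obstacle.
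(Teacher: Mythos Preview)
Your proof is correct and follows essentially the same approach as the paper: fix an arbitrary contradiction search tree $T$, let Alice play according to $T$, and observe that the at-least-$t$ ChooseAny branching points yield at least $2^t$ distinct root-to-leaf paths in $T$. The only difference is packaging: the paper phrases the count probabilistically (Alice answers ChooseAny uniformly at random, so each leaf is reached with probability at most $2^{-t}$), whereas you phrase it combinatorially via the injectivity of the map from executions to leaves of $T$; these are equivalent formulations of the same argument.
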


\begin{proof} 
Consider some contradiction search tree $T$ for $\phi$. 
We construct probabilistic distribution on the leaves of $T$ that corresponds to Bob's strategy and
the following randomized strategy of Alice. Alice chooses the tree $T$ and if Bob moves ChooseAny, Alice 
chooses a value at random with equal probabilities. By the statement of the Lemma the probability 
that the game will finish in every particular leaf is at most $2^{-t}$. Since with probability $1$ 
the game will finish in a leaf, the number of leafs is at least $2^t$.
\end{proof}

In applications of Lemma~\ref{decision-tree-lemma} it is convenient to describe the strategy of Bob in the following terms: 
Bob has a partial substitution that assigns values to variables that Alice asked for and probably to some other variables.
The current substitution should not falsify constraints. If Alice asks for a variable that is determined by the substitution,
Bob returns the assigned value, otherwise Bob moves ChooseAny and extends the substitution by the value that Alice chooses and probably
substitutes values to some other variables.

\subsection{Sperner's lemma}
\label{sectsperner}

\newcommand{\com}[1]{{\color{blue}(#1)}}
\newcommand{\smart}[1]{{\color{red}(#1)}}


Let us consider triangle $T$, each side of whom is divided on $n - 1$ segments of the same length.
Denote the ends of the segments by $p_1, \dots, p_n$, $p_n = q_1 \dots, q_n$, $q_n = r_1, \dots, r_n = p_1$ as it is shown on Figure~\ref{sp-invariants}.

The standard triangulation is obtained by joining the following pairs of vertices:
\begin{enumerate}
	\item $p_i$ and $q_{n - (i-1)}$, $\forall i (2 \le i \le n - 1)$
	\item $p_i$ and $r_{n - (i-1)}$, $\forall i (2 \le i \le n - 1)$
	\item $q_i$ and $r_{n - (i-1)}$, $\forall i (2 \le i \le n - 1)$
\end{enumerate}
The crossings between the above straight-line segments are called crossing vertices.

Denote all vertices in the triangle by $S$. A vertex coloring $c:~S~\to~\{Red, Green, Blue\}$ 
is said to be Sperner's coloring if vertices $p_1$, $q_1$, and $r_1$ are labeled by three different 
colors and each vertex on each edge of triangle $T$ is colored by only 
one of two colors of the ends of this edge.

In the triangulation we will discuss only small triangles, i.e. triangles without crossings inside.
A triangle in the triangulation is said to be {\em trichromatic} if all its vertices are colored with three different colors.

\begin{lemma}[Sperner~\cite{Spe80}]
\label{sperner-lemma}
If $c$ is a Sperner coloring, the triangulation contains a trichromatic triangle.
\end{lemma}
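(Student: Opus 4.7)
The plan is to prove Sperner's lemma by a classical parity (counting) argument, showing that the number of trichromatic triangles in the triangulation is odd, which of course implies the existence of at least one.

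Without loss of generality I will assume $c(p_1) = Red$, $c(q_1) = Green$, $c(r_1) = Blue$. Call an edge in the triangulation an \emph{RG-edge} if its two endpoints are colored Red and Green. The key local observation is that for every small triangle $\Delta$ of the triangulation, the number $e(\Delta)$ of RG-edges among the three sides of $\Delta$ equals $1$ when $\Delta$ is trichromatic, and equals $0$ or $2$ otherwise (go through the possible unordered color triples $RRR, RRG, RGG, GGG, RRB, GGB, RGB, \ldots$ and check). In particular, $e(\Delta) \equiv 1 \pmod 2$ if and only if $\Delta$ is trichromatic.

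Next I would sum $e(\Delta)$ over all small triangles $\Delta$ of the standard triangulation. Every RG-edge that lies in the interior of the big triangle $T$ is shared by exactly two small triangles and so contributes $2$ to the sum; every RG-edge lying on the boundary of $T$ contributes $1$. Hence, modulo $2$,
\begin{equation*}
\sum_\Delta e(\Delta) \;\equiv\; \#\{\text{RG-edges on the boundary of }T\} \pmod 2.
\end{equation*}

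Now I would analyze the boundary. By the Sperner condition the side $q_1 r_1$ uses only Green and Blue, and the side $r_1 p_1$ uses only Blue and Red, so neither side contains any RG-edge. The side $p_1 q_1$ uses only Red and Green; walk along it from $p_1$ (Red) to $q_1$ (Green), and observe that the number of consecutive unit edges whose endpoints differ in color is exactly the number of times the color flips along the walk, which must be odd since the walk starts Red and ends Green. Thus the number of RG-edges on $\partial T$ is odd, so $\sum_\Delta e(\Delta)$ is odd, so the number of trichromatic small triangles is odd, and in particular positive.

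The only mildly subtle step is the case analysis for $e(\Delta)$ and the parity count along $p_1 q_1$; both are elementary, and neither should present any real obstacle. No tool beyond elementary combinatorics is needed, and the proof works uniformly for the standard triangulation described in the previous paragraphs of the section.
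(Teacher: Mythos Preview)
Your argument is the classical parity proof of Sperner's lemma and is correct in every detail: the local case analysis for $e(\Delta)$, the double-counting of interior RG-edges, and the parity count along the side $p_1q_1$ are all standard and sound.

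Note, however, that the paper does not actually prove this lemma; it simply states it with a citation to Sperner's original paper and then uses it to conclude that the CSP $\phi_n$ is unsatisfiable. So there is no ``paper's own proof'' to compare against here. Your proof is exactly the kind of elementary self-contained argument one would supply if asked to fill in the gap, and it matches the proof one finds in Sperner's paper and in standard combinatorics texts.
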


Let denote the unsatisfable CSP corresponding to Lemma~\ref{sperner-lemma} by $\phi_n$.
Variables correspond to the vertices of the triangulation and can take values from alpabet $\{Red, Green, Blue\}$. 
Three constraints fix particular colors for vertices of triangle $T$ and we set a constraint for each vertex on each side of $T$, which fixes a set of two possible colors.
Also for each small triangle in the triangulation we set a constraint which forbids its trichromatic coloring.
The unsatisfability of $\phi_n$ immediately follows from Sperner's lemma.

\begin{lemma}
\label{bob-sperner-lemma}
	For $\CSP$ $\phi_n$ there exists such a strategy of Bob that gives
   $\Omega(n)$ ChooseAny answers for any strategy of Alice.
\end{lemma}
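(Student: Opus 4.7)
The plan is to give Bob a strategy that maintains, throughout the game, a partial coloring consistent with all constraints of $\phi_n$ together with a designated ``flexible row'' $L$ of $\Omega(n)$ unassigned triangulation vertices. The row $L$ is chosen so that every consistent extension of Bob's partial coloring restricts each vertex of $L$ to a fixed two-element subset of $\{Red, Green, Blue\}$. Whenever Alice queries a vertex in $L$, Bob answers with ChooseAny on that two-element subset; whenever she queries a vertex outside $L$, Bob answers deterministically from a preset template. The total number of ChooseAny moves along any root-to-leaf path is therefore at least the number of $L$-queries Alice makes, and the strategy is designed so this is $\Omega(n)$.

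Concretely, I would take $L$ to be a row of vertices parallel to side $p_1 q_1$ at distance roughly $n/3$ from that side, so that $|L| = \Omega(n)$. Bob's template colors the region below $L$ with $Red$, introducing $Green$ only in a short ``tail'' near corner $q_1$ in order to satisfy the boundary constraint on $p_1 q_1$; symmetrically, the region above $L$ is colored $Blue$, with a short $Green$ tail near $q_1$ to satisfy the constraint on $q_1 r_1$. The template leaves $L$ entirely uncolored. Every small triangle lying strictly below $L$ is colored in $\{Red, Green\}$, every small triangle strictly above $L$ is colored in $\{Blue, Green\}$, and every small triangle straddling $L$ has its non-$L$ vertices all $Red$ (if those vertices are below $L$) or all $Blue$ (if above). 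Consequently, ChooseAny on $\{Red, Blue\}$ at any vertex of $L$ keeps the partial coloring trichromatic-free, since neither choice can complete a three-color triangle with the committed neighbors.

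The main obstacle is that, by Sperner's lemma itself (Lemma~\ref{sperner-lemma}), any completion of Bob's template must contain a trichromatic triangle, necessarily located near the $Green$ tail by corner $q_1$. A naive Alice strategy could then design her tree to query the three vertices of this inevitable trichromatic triangle directly, reaching a leaf in $O(1)$ queries with no ChooseAny moves at all. To defeat such tactics, Bob's strategy must be \emph{adaptive}: the exact location of the forced trichromatic triangle must depend on Alice's past choices, so that Alice cannot commit to attacking a single triangle without first branching on many ChooseAny moves. A natural way to achieve this is to have Bob maintain a family of candidate templates whose $Green$ tails are positioned at $\Omega(n)$ different heights, and have Bob's answers be consistent simultaneously with as many of them as possible; any query to a vertex whose color differs across two still-consistent templates must then be answered by ChooseAny. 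Formalizing this invariant and showing that any Alice strategy is forced to spend $\Omega(n)$ ChooseAny moves (either on vertices of $L$ or on the flexibility of the template choice) is the key technical step, after which Lemma~\ref{decision-tree-lemma} yields the claimed $2^{\Omega(n)}$ tree size bound.
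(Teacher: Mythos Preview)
Your proposal is not a proof: you explicitly defer the ``key technical step'' of constructing the adaptive strategy and verifying the $\Omega(n)$ bound, and that step is the entire content of the lemma.

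Moreover, the adaptive mechanism you sketch does not yield $\Omega(n)$. A family of templates indexed by the height of the Green tail is a one-parameter family with $\Theta(n)$ members; Alice can pin down the tail position by binary search, forcing only $O(\log n)$ ChooseAny answers before she has localized the unavoidable trichromatic triangle. To force $\Omega(n)$ ChooseAny moves, Bob's residual freedom must encode $\Omega(n)$ essentially independent binary choices, \emph{and} Alice must be unable to falsify any constraint without resolving a linear number of them. Your row $L$ has the independence property, but---as you yourself observe---Alice need never query $L$; the varying-tail idea addresses that second issue but sacrifices the first. You have not shown how to get both at once.

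The paper's construction achieves both via a different geometry. Bob maintains a monotone lattice path $P$ (each step going right or right-down) starting from the midpoint of one side, with Green on $P$ and Blue just below it; the eventual trichromatic triangle is confined to a parallelogram $H$ beyond the current endpoint of $P$, shielded by an empty ``buffer'' strip. Whenever Alice queries inside the buffer, Bob answers ChooseAny and extends $P$ to the next unspoiled horizontal/diagonal ``slot,'' with the direction of the extension determined by Alice's choice. There are $\Omega(n)$ slots in each of the two directions, each of Alice's queries spoils at most $O(1)$ of them, and Bob's invariants guarantee that no constraint is falsified while unspoiled slots remain in both directions. Hence $\Omega(n)$ ChooseAny answers are forced. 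Here the independence comes from the slots being disjoint resources, and the forcing comes from the trichromatic triangle being physically hidden behind the moving buffer.
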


Before the proof we associate variables of the $\CSP$ with vertices of the triangulation and a partial substitution with a coloring.
Some vertices can be non-colored, so values of the corresponding vertices has not been set.

\begin{proof}

We provide the strategy of Bob and then prove the lower bound.

As it is described in the previous section, Bob maintains a coloring. 
If Alice asks Bob for a colored vertex, Bob returns its color.

Bob maintains a path $P$ which starts from $r_{\lceil \frac{n}{2}\rceil}$ and stops at a crossing vertex $f$. 
Path $P$ goes along edges of the triangulation and can use only right or right-down directions.
Bob keeps in mind two segments which start from vertex $f$ and go in right and right-down directions to the border of the triangle.
We denote them by $RP$ and $DP$, respectively.
We denote the parallelogram, surrounded by $RP$, $DP$ and the border of triangle $T$, by $H$

{\em Buffer} is an area that is strictly in parallelogram $H$ and contains all points whose distance to $RP$ or $DP$ are at most two.
Note that the buffer does not include points on $RP$ and $DP$. 

Informally Bob hides a trichromatic triangle in parallelogram $H$. Formally he maintains the following invariants (see Figure~\ref{sp-invariants}):

\begin{enumerate}
	\item\label{inv-bord} The border vertices are colored according the following
	\begin{enumerate}
	\item $c(p_1) = c(r_i) = Blue$, for any $i$, with $\lceil \frac{n}{2} \rceil - 1 \le i \le n - 1$
	\item $c(q_n) = c(r_i) = Green$, for any $i$, with $2 \le i \le \lceil \frac{n}{2} \rceil$
	\item $c(p_i) = c(q_j) = Red$, for any $i$, $j$, with $2 \le i \le n$, $2 \le j \le n - 1$
	\end{enumerate}	
	\item\label{inv-buf} Vertices in the buffer are empty. It means Alice has not asked for them.
	\item\label{inv-path} Any vertex on path $P$ is green, and any vertex just below the path is blue.
	\item\label{inv-p-rp} Any vertex above $P \cup RP$ is colored, and each blue vertex is surrounded by red vertices.
	\item\label{inv-p-dp} Any vertex left to $P \cup DP$ is colored, and each green vertex is surrounded by red vertices.
	\item\label{inv-rp-dp} Any vertex on $RP$ is green, any vertex on $DP$ is blue.
	\item\label{inv-paral} Any green or blue vertex in $H$, excluding $RP$ and $DP$, is surrounded by red vertices.
	\item\label{inv-empty} 
All uncolored vertices are in $H$.
\end{enumerate}

\begin{figure}
\centering
\begin{tikzpicture}[scale=.5]
  \def\n{24}	
  \def\m{12} 
  \def\k{23} 

    \draw [dashed, very thick, rounded corners = 5pt]
    	(5.7, 10.8) -- (8.5, 10.8) -- (9.75, 8.2) -- (8.8, 8.2) --
	    (7.9, 9.9) -- (5.5, 10) -- cycle;
	\draw[-latex, thick] (9.3, 12.5) node[above]
    	{$P$} to[out = -75, in = 45] (9, 10.5);
    
	\draw [dashed, very thick, rounded corners = 5pt]
    	(8.5, 8.95) -- (18.6, 8.95) -- (18.6, 8.3) -- (8.5, 8.3) -- cycle;
	\draw[-latex, thick] (13.7, 11) node[above]
        {$RP$} to[out = -75, in = 80] (13.6, 9.15);
  
	\draw [dashed, very thick, rounded corners = 5pt]
    	(9.75, 8.2) -- (14.2, 0.3) -- (13.4, 0.3) -- (8.8, 8) -- cycle;
	\draw[-latex, thick] (9, 2) node[below]
        {$RP$} to[out = 75, in = -135] (11, 4);
  
  \draw(0,0) -- ++ (0:\n-1);
  \draw(0,0) -- ++ (60:\n-1);
  \draw(0,0) ++ (0:\n-1) -- ++ (120:\n-1);
  
  \draw[very thick] (0,0) ++(60:\m) -- ++(0:2) -- ++ (-60:2);
  \draw[style=help lines] (0,0) ++(60:\m-1) -- ++(0:3);
  \draw[style=help lines] (0,0) ++(60:\m) -- ++(-60:1)-- ++(60:1) -- ++ (-60:1) -- ++ (60:1);
  \draw[style=help lines] (0,0) ++(60:\m-1) ++(0:2) -- ++(-60:1) -- ++(60:1);
  \draw[style=help lines] (0,0) ++(60:\m-2) ++(0:3) -- ++(0:1);
  
  \draw[very thick] (0,0) ++(60:\m) ++(0:3) ++(-60:2) ++ (180:1) -- ++(0:9);
  \draw[style=help lines] (0,0) ++(60:\m) ++(0:2) ++(-60:3) -- ++(0:9);
  \draw[very thick] (0,0) ++(60:\m) ++(0:4) ++(-60:4) -- ++(0:7);
  
  \draw[very thick] (0,0) ++(60:\m) ++(0:2) ++(-60:2) -- ++(-60:10);
  \draw[style=help lines] (0,0) ++(60:\m) ++(0:2) ++(-60:4) ++ (0:1) -- ++(-60:8);
  \draw[very thick] (0,0) ++(60:\m) ++(0:2) ++(-60:4) ++ (0:2) -- ++(-60:8);

  \draw[style=help lines] (0,0) ++(60:\m) ++(0:2) ++(-60:3) -- ++(60:1) -- ++(-60:1) -- ++(60:1) -- ++(-60:1) -- ++(60:1) -- ++(-60:1) 
                                          -- ++(60:1) -- ++(-60:1) -- ++(60:1) -- ++(-60:1) -- ++(60:1) -- ++(-60:1) 
                                          -- ++(60:1) -- ++(-60:1) -- ++(60:1) -- ++(-60:1) -- ++(60:1); 
  \draw[style=help lines] (0,0) ++(60:\m) ++(0:2) ++(-60:4) -- ++(60:1) -- ++(-60:1) -- ++(60:1) -- ++(-60:1) -- ++(60:1) -- ++(-60:1) 
                                          -- ++(60:1) -- ++(-60:1) -- ++(60:1) -- ++(-60:1) -- ++(60:1) -- ++(-60:1) 
                                          -- ++(60:1) -- ++(-60:1) -- ++(60:1) -- ++(-60:1) -- ++(60:1);                                           
  
  \draw[style=help lines] (0,0) ++(60:\m) ++(0:2) ++(-60:4) -- ++(0:1) -- ++(-120:1) -- ++(0:1) -- ++(-120:1) -- ++(0:1) -- ++(-120:1)
										  -- ++(0:1) -- ++(-120:1) -- ++(0:1) -- ++(-120:1) -- ++(0:1) -- ++(-120:1)
										  -- ++(0:1) -- ++(-120:1) -- ++(0:1) -- ++(-120:1);
										  
  \draw[style=help lines] (0,0) ++(60:\m) ++(0:3) ++(-60:4) -- ++(0:1) -- ++(-120:1) -- ++(0:1) -- ++(-120:1) -- ++(0:1) -- ++(-120:1)
										  -- ++(0:1) -- ++(-120:1) -- ++(0:1) -- ++(-120:1) -- ++(0:1) -- ++(-120:1)
										  -- ++(0:1) -- ++(-120:1) -- ++(0:1) -- ++(-120:1);
	
  \draw[style=help lines] (0,0) ++(60:\m-1) ++(0:2) ++(-60:1) -- ++(-60:1) -- ++(0:1) ++(120:1) -- ++(-120:1);
  \foreach \point in {1,...,\m} {
	\filldraw[fill=blue] (0,0) ++(60:\point-1) circle(6pt);
	\filldraw[fill=green] (0,0) ++(60:\m+\point-1) circle(6pt);
  }
  
  \foreach \point in {1,...,\k} {
	\filldraw[fill=red] (0,0) ++(0:\point) circle(6pt);
	\filldraw[fill=red] (0,0) ++(0:\n-1) ++(120:\point-1) circle(6pt);
  }
  
  \filldraw[fill=green] (0,0) ++(60:\m) ++(0:1) circle(6pt) ++(0:1) circle(6pt) ++(-60:1) circle(6pt) ++ (-60:1) circle(6pt);
  \filldraw[fill=blue] (0,0) ++(60:\m-1) ++(0:1) circle(6pt) ++(0:1) circle(6pt) ++(-60:1) circle(6pt) ++(-60:1) circle(6pt);

  \foreach \point in {1,...,8} {
	\filldraw[fill=green] (0,0) ++(60:\m) ++(0:2) ++(-60:2) ++(0:\point) circle(6pt);
  }
  
  \foreach \point in {1,...,9} {
	\filldraw[fill=blue] (0,0) ++(60:\m) ++(0:2) ++(-60:2) ++(-60:\point) circle(6pt);
  }
  
  \draw[style=help lines] (60:5) ++(0:4) -- ++(180:1) -- ++(60:1) -- ++(-60:1)
					   -- ++(120:1) -- ++(0:1) -- ++(-120:1)
					   -- ++(60:1) -- ++(-60:1) -- ++(180:1)
					   -- ++(0:1) -- ++(-120:1) -- ++(120:1)
					   -- ++(-60:1) -- ++(180:1) -- ++(60:1)
					   -- ++(-120:1) -- ++(120:1) -- ++(0:1);
					   
  \filldraw[fill=red] (60:5) ++(0:3) circle(6pt) ++(60:1) circle(6pt) ++(0:1) circle(6pt) ++(-60:1) circle(6pt) ++(-120:1) circle(6pt) ++(180:1) circle(6pt);
  \filldraw[fill=green] (60:5) ++(0:4) circle(6pt);	
	
  \draw[style=help lines] (60:\m+3) ++(0:4) -- ++(180:1) -- ++(60:1) -- ++(-60:1)
					   -- ++(120:1) -- ++(0:1) -- ++(-120:1)
					   -- ++(60:1) -- ++(-60:1) -- ++(180:1)
					   -- ++(0:1) -- ++(-120:1) -- ++(120:1)
					   -- ++(-60:1) -- ++(180:1) -- ++(60:1)
					   -- ++(-120:1) -- ++(120:1) -- ++(0:1);
					   
  \filldraw[fill=red] (60:\m+3) ++(0:3) circle(6pt) ++(60:1) circle(6pt) ++(0:1) circle(6pt) ++(-60:1) circle(6pt) ++(-120:1) circle(6pt) ++(180:1) circle(6pt);
  \filldraw[fill=blue] (60:\m+3) ++(0:4) circle(6pt);
  
  \draw[style=help lines] (60:4) ++(0:15) -- ++(180:1) -- ++(60:1) -- ++(-60:1)
					   -- ++(120:1) -- ++(0:1) -- ++(-120:1)
					   -- ++(60:1) -- ++(-60:1) -- ++(180:1)
					   -- ++(0:1) -- ++(-120:1) -- ++(120:1)
					   -- ++(-60:1) -- ++(180:1) -- ++(60:1)
					   -- ++(-120:1) -- ++(120:1) -- ++(0:1);
					   
  \filldraw[fill=red] (60:4) ++(0:14) circle(6pt) ++(60:1) circle(6pt) ++(0:1) circle(6pt) ++(-60:1) circle(6pt) ++(-120:1) circle(6pt) ++(180:1) circle(6pt);


  \draw (60:12) node[left = 7] {\large $r_{\lceil \frac{n}{2} \rceil}$};  
  \draw (60:\m-3)++(0:10) node[rotate=0] {\bf \large Buffer};
  \draw (60:\m-2)++(0:4) node[rotate=0, above right = 3] {\large f};
  \draw (60:\m-7)++(0:10) node[rotate=-60] {\bf \large Buffer};
  \draw (0,0) node[below] {$p_1 = r_n$};
  \draw (0:\n-1) node[below] {$q_1 = p_n$};
  \draw (60:\n-1) node[above] {$r_1 = q_n$};
\end{tikzpicture}
\caption{Invariants} 
\label{sp-invariants}
\end{figure}
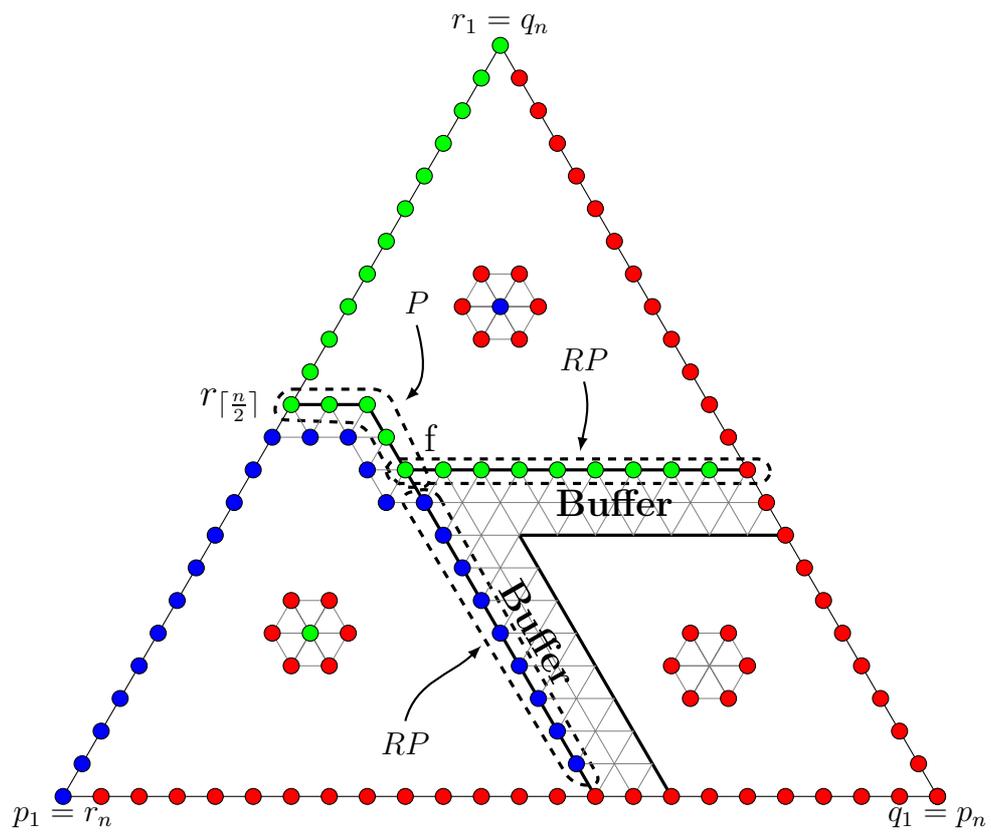

It is easy to see that if all invariants are hold, then there are no trichromatic triangle in the current coloring. 

Absence of such a triangle above $P \cup RP$ (including border $P \cup RP$) follows from Invariant~\ref{inv-p-rp}.
There is not adjacent vertices colored on blue and green. Similarly absence of a trichromatic triangle left to $P \cup DP$ 
follows from Invariant~\ref{inv-p-dp}. Invariant~\ref{inv-path} guarantees that vertices on $P$ can not lie on a trichromatic triangle, because all these vertices don't have adjacent red vertex below. 
Finally Invariants~\ref{inv-buf} and \ref{inv-paral} implies that there are no trichromatic triangles in $H$.

We will prove that if after a query of Alice, Bob cannot 
support these invariants, he has alredy given $\Omega(n)$ ChooseAny 
answers.

Initially $f = r_{\lceil \frac{n}{2} \rceil}$, $P$ contains only $f$. Segments $RP$ and $DP$ are defined by the current position of vertex $f$. In the current coloring all vertices in area above $RP$ are colored by green and ones left to $DP$ by 
blue.

Let Alice ask for a non-colored vertex. By Invariant~\ref{inv-empty} such a vertex lies in parallelogram $H$. 
Bob answers accordingly the following cases, which are shown on 
Figure~\ref{cases}. 

\begin{figure}
\centering
\begin{tikzpicture}
  \def\n{6}	
  \def\m{8}

  \draw (0,0) -- ++(120:\n) -- ++(0:\m);
  \draw (0, 0)-- (0:\m) -- ++(120:\n);
  
  \draw[style=help lines] (0:3) -- ++(120:\n);
  \draw[style=help lines] (120:\n-3) -- ++(0:\m);
	
  \foreach \level in {0,...,2} {
	  \foreach \point in {1,...,\m} {
		\draw[style=help lines] (120:\n-\level) ++ (0:\point) -- ++(-120:1) -- ++(120:1);
	  }
	  
	  \foreach \point in {1,...,\n} {
		\draw[style=help lines] (0:\level) ++ (120:\point-1) -- ++ (60:1) -- ++(180:1);
	  }
  }
  
  \foreach \level in {1,...,2} {
	\draw[very thick] (0:\level) -- ++(120:\n-\level);
	\draw[very thick] (0,0) ++(0:\level) ++(120:\n-\level) -- ++(0:\m-\level);
  }
  
  \node at ($(120:\n-1) + (0:1)$) [circle,draw=black,fill=white] {1};
  
  \foreach \point in {3,...,\m} {
	\node at ($(120:\n-1) + (0:\point-1)$) [circle,draw=black,fill=white] {2};
	\node at ($(120:\n-2) + (0:\point-1)$) [circle,draw=black,fill=white] {2};
  }
  
  \foreach \point in {4,...,\n} {
	\node at ($(0:1) + (120:\point-3)$) [circle,draw=black,fill=white] {3};
	\node at ($(0:2) + (120:\point-3)$) [circle,draw=black,fill=white] {3};
  }
  \node at ($(120:\n-2) + (0:1)$) [circle,draw=black,fill=white] {3};
  
  \foreach \point in {1,...,3} {
	 \node at ($(0:3) + (120:\point)$) [circle,draw=black,fill=white] {3};
  }

  \foreach \point in {4,...,\m} {
	\node at ($(120:3) + (0:\point-1)$) [circle,draw=black,fill=white] {2};
  }
  
  \foreach \point in {0,...,\m} {
	\node at ($(0,0) + (120:\n) + (0:\point)$) [circle, fill=green] {};
  }
  
  \foreach \point in {1,...,\n} {
	\node at ($(0,0) + (120:\point-1)$) [circle, fill=blue] {};
  }
  
  \foreach \point in {4,...,7} {
   \node at ($(120:1) + (0:\point)$) [circle,draw=black,fill=white!80!gray] {4};
   \node at ($(120:2) + (0:\point)$) [circle,draw=black,fill=white!80!gray] {4};
  }

  \node at ($(0,0) + (120:\n)$) [circle, fill=green] {\large f};
  \draw (120:\n)++(0:4) node[rotate=0, above right] {\large RP};
  \draw (120:\n-4) node[rotate=-60, below left] {\large DP};
  
\end{tikzpicture}
\caption{Cases} 
\label{cases}
\end{figure}

In Case~4 Bob returns ChooseAny(Green, Blue) and surrounds 
selected vertex by red ones to hold Invariant~\ref{inv-paral}.

In cases 1~--~3 Bob has to move the buffer to hold Invariant~\ref{inv-buf}. 
In order to explain how to move teh buffer, we introduce two new notions: a slot and an expanded slot. 
The parallelogram $H$ is a union of horizontal and diagonal segments. The crossings of these segments are vertices of the triangulation.
For the horizontal segments we skip first three ones on the top and divide all others onto the groups with four segments in each.
We do the same thing with the vertical segments skipping three leftmost ones.
We ignore the last segments of each type if their number is at most 3.

{\em An expanded slot} is a union of two groups of horizontal and diagonal groups of segments.

For an expanded slot we consider the parallelogram which is formed by the borders 
of triangle $T$, the next to the leftmost segment in the vertical group and the next to top segment in horizontal one.
{\em A slot} is a subset of the considered expanded slot which also lies in the described parallelogram.

The left and top borders of a slot can be used for new $RP$ and $DP$, respectively. The rest of the slot
can be used for the new buffer.

We say that a group of segments is clear if Alice has not requested any vertex in this group.

If Alice's request fits Cases 1~--~3, Bob looks for the first clear 
horizontal and diagonal groups of segments starting from vertex $f$, forms an expanded slot 
and extends path $P$ as it will be described below. The new end of path $P$ defines new position of parallelogram $H$ and segments $RP$ and $DP$.Bob adds all vertices, which lie outside of new parallelogram $H$ or belong segment $RP$ and 
$DP$, in the partial substitution as it is described in 
Invariants~\ref{inv-p-rp} and ~\ref{inv-p-dp}.

Now we are ready to describe how Bob should answer on Alice's requests.

\begin{enumerate}
	\item Bob answers ChooseAny(Green, Blue). If Alice chooses green, Bob 
	extends $P$ accordinally Figure~\ref{ext-2}, 
	and Figure~\ref{ext-1} otherwise. 
	\item Bob answers ChooseAny(Green, Red) and extends $P$ as showed on 
	Figure~\ref{ext-2} regardless Alice's choice.
	\item Bob answers ChooseAny(Blue, Red) and extends $P$ as showed on 
	Figure~\ref{ext-1} regardless Alice's choice.
	\item Bob answers ChooseAny(Green, Blue) and colors all uncolored
	vertices adjacent to requested vertex by red.
\end{enumerate}

Note that all of these assignments do not violate the invariants. Path $P$ 
is extended by empty vertices or colored by green and 
all adjacent vertices lies below $P$ are empty or colored by blue.  
The fourth case guarantees satisfying Invariant~\ref{inv-paral}, and \ref{inv-p-rp} and
\ref{inv-p-dp} in the future.

\begin{figure}
\centering
\begin{tikzpicture}[scale=.6]
	\tikzstyle{hlines}=[color=gray!50!white, line width=0.1mm]
  \def\n{11}	
  \def\m{13}
  
  \foreach \level in {1,...,2} {
	\draw[hlines] (0:\level) -- ++(120:\n);
	\draw[hlines] (120:\n-\level) -- ++(0:\m);
  }
  
  \foreach \level in {0,...,1} {
	  \foreach \point in {1,...,\m} {
		\draw[hlines] (120:\n-\level) ++ (0:\point) -- ++(-120:1) -- ++(120:1);
	  }
	  kz
	  \foreach \point in {1,...,\n} {
		\draw[hlines] (0:\level) ++ (120:\point-1) -- ++ (60:1) -- ++(180:1);
	  }
  }
  
  \foreach \level in {0,...,3} {
	\draw[hlines] (0,0) ++(120:\n-4-\level) -- ++(0:\m);
	
	\draw[hlines] (0,0) ++(0:5+\level) -- ++(120:\n);
  }
  
  \foreach \level in {0,...,2} {
	\foreach \point in {1,...,\m} {
		\draw[hlines] (120:\n-5-\level) ++ (0:\point-1) -- ++(60:1) -- ++(-60:1);
	}
	
	\foreach \point in {1,...,\n} {
		\draw[hlines] (0,0) ++(0:5+\level) ++ (120:\point-1) -- ++(60:1) -- ++(180:1); 
	}
  }
  
  \draw (0,0) -- ++(120:\n) -- ++(0:\m) -- ++(-60:\n) -- ++(180:\m);
  \draw (0,0) ++(0:2) -- ++(120:\n-2) -- ++(0:\m-2);
  
  \draw (0,0) ++(0:6) -- ++(120:\n-5)-- ++(0:\m-6);
  
  \draw (0,0) ++(0:8) -- ++(120:\n-7) -- ++(0:\m-8);

  \draw[pattern = north east lines] (120:\n) ++ (0:5) -- ++(0:3) --
  	  ++(-60:4) -- ++(0:5) -- ++(-60:3) -- ++(180:5) -- ++(-60:4) --
  	  ++(180:3) -- ++(120:4) -- ++(180:5) -- ++(120:3) -- ++(0:5) -- cycle;

  \draw[-latex, thick] (1.9, 11.3) node[above]
      {\Large Expanded slot} to[out = -75, in = 80] (2, 9.7);  
	
  
  \draw[->, line width=0.5mm] (120:\n) -- ++(0:6) -- ++(-60:5);
\end{tikzpicture}
\caption{The First Extension of path $P$} 
\label{ext-1}
\end{figure}
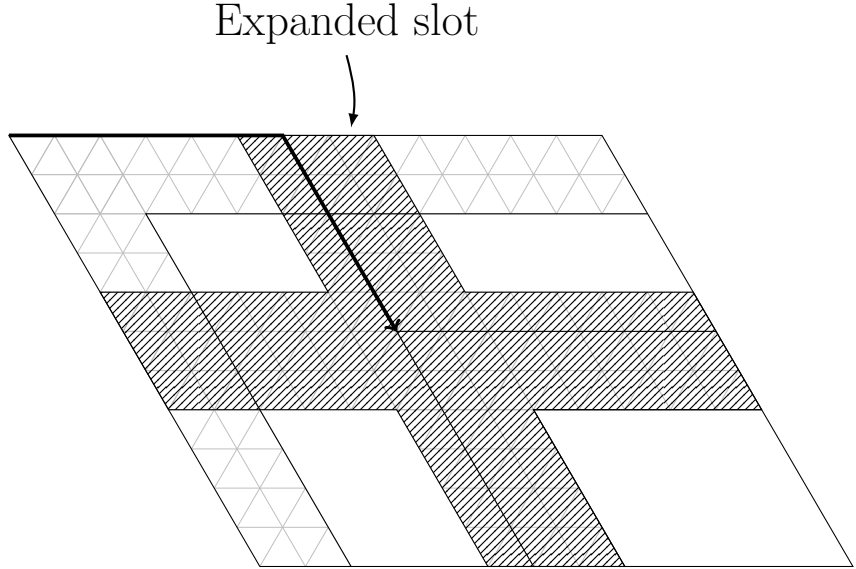 

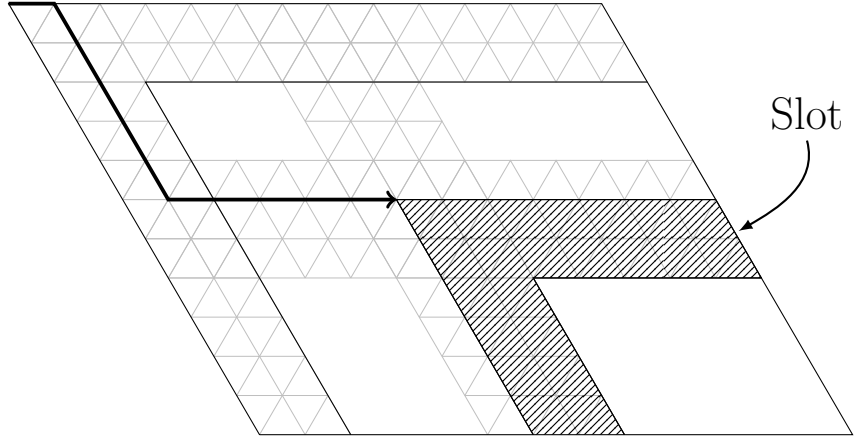
\begin{figure}
\centering
\begin{tikzpicture}[scale=.6]
	\tikzstyle{hlines}=[color=gray!50!white, line width=0.1mm]
  \def\n{11}	
  \def\m{13}

  \foreach \level in {1,...,2} {
	\draw[hlines] (0:\level) -- ++(120:\n);
	\draw[hlines] (120:\n-\level) -- ++(0:\m);
  }
  
  \foreach \level in {0,...,1} {
	  \foreach \point in {1,...,\m} {
		\draw[hlines] (120:\n-\level) ++ (0:\point) -- ++(-120:1) -- ++(120:1);
	  }
	  kz
	  \foreach \point in {1,...,\n} {
		\draw[hlines] (0:\level) ++ (120:\point-1) -- ++ (60:1) -- ++(180:1);
	  }
  }
  
  \foreach \level in {0,...,3} {
	\draw[hlines] (0,0) ++(120:\n-4-\level) -- ++(0:\m);
	
	\draw[hlines] (0,0) ++(0:5+\level) -- ++(120:\n);
  }
  
  \foreach \level in {0,...,2} {
	\foreach \point in {1,...,\m} {
		\draw[hlines] (120:\n-5-\level) ++ (0:\point-1) -- ++(60:1) -- ++(-60:1);
	}
	
	\foreach \point in {1,...,\n} {
		\draw[hlines] (0,0) ++(0:5+\level) ++ (120:\point-1) -- ++(60:1) -- ++(180:1); 
	}
  }
  
  \draw (0,0) -- ++(120:\n) -- ++(0:\m) -- ++(-60:\n) -- ++(180:\m);
  \draw (0,0) ++(0:2) -- ++(120:\n-2) -- ++(0:\m-2);

  \draw[pattern = north east lines] (0,0) ++(0:6) -- ++(120:\n - 5) --
      ++(0:\m-6) -- ++(300:\n - 9) -- ++(180:\m - 8) -- ++(300:\n - 7);

  \draw[-latex, thick] (12, 6.5) node[above]
      {\Large Slot} to[out = -75, in = 30] (10.5, 4.5);  
  
  \draw (0,0) ++(0:6) -- ++(120:\n-5)-- ++(0:\m-6);
  \draw (0,0) ++(0:8) -- ++(120:\n-7) -- ++(0:\m-8);
	
  
  \draw[->, line width=0.5mm] (120:\n) -- ++(0:1) -- ++(-60:5) -- ++(0:5);
\end{tikzpicture}
\caption{The Second Extension of path $P$} 
\label{ext-2}
\end{figure} 


At the start of the game the number of clear groups of segments in both 
directions is $\Omega(n)$. In a turn Alice can spoil at most two groups in each 
direction. Every time Alice spoils a group, Bob either gives ChooseAny answer 
or finishes the game. The latter means that Bob can not hold all invariants. 

Every time Bob moves the buffer he skips only spoiled groups. So at the 
end of the game all slots in at least one direction will be spoiled. 
The latter guarantees that Bob gives $\Omega(n)$ ChooseAny answers.

Remind that Bob's strategy guarantees that a trichromatic triangle can not be found until
all invariants are hold. So for any strategy of Alice Bob gives $\Omega(n)$ ChooseAny answers. 

\end{proof}

\subsection{Arrows \label{sectarrows}}

We consider one more $\CSP$, denote it as $\psi_n$, that corresponds to the game on a $n\times n$ square. 

The variables correspond to the cells of the square and can take values from set
$W = \{\al, \alu, \au, \aru, \ar, \ard, \ad, \ald\}$.

There are two types of constraints.
1) Two arrows in two cells with a common edge differ by at most $45^\circ$.
2) All arrows on the boundary of the $n \times n$ square should not be directed outside of the square.
We set a constraint for each pair of adjacent cells and each cell on the border.

The unsatisfiability of $\psi_n$ follows from Brouwer's theorem.

\begin{theorem}
\label{arrow-impossible}
    $\CSP$ $\psi_n$ is unsatisfiable. 
\end{theorem}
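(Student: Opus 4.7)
My plan is to argue by contradiction using Brouwer's fixed-point theorem. Assume $\psi_n$ has a satisfying assignment, associating to each cell $(i,j)$ the unit vector $\vec v_{ij}\in\mathbb{R}^2$ corresponding to the arrow placed there. From this data I build a continuous self-map $F\colon[0,n]^2\to[0,n]^2$ with no fixed point, which contradicts Brouwer.

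The map will be $F(x)=x+\varepsilon V(x)$ for a sufficiently small $\varepsilon>0$, where $V\colon[0,n]^2\to\mathbb{R}^2$ is a continuous vector field built from the arrows as follows. I set $V(c_{ij})=\vec v_{ij}$ at the center $c_{ij}$ of each cell, and at each grid vertex $p$ I set $V(p)$ to be the arithmetic average of the arrows of the (at most four) cells incident to $p$. I then subdivide each cell into four triangles by joining its center to its four corners, and extend $V$ linearly on each triangle. The central claim is that $V$ is nowhere zero. Since $V$ on any triangle is a convex combination of its values at the three vertices, it suffices to show that all arrow vectors feeding into such a triangle lie in a common $90^\circ$ arc of the unit circle. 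For this, I observe that at any interior grid vertex the four surrounding arrows lie in a common $90^\circ$ arc: consecutive arrows around $p$ differ by at most $45^\circ$, and their signed angular increments sum to a multiple of $360^\circ$, necessarily $0$ since four steps of absolute value at most $45^\circ$ cannot accumulate to $\pm 360^\circ$; the partial sums are therefore bounded by $90^\circ$ in absolute value. Boundary and corner vertices involve at most two or one arrow and are trivial. Hence the three values at the vertices of any triangle all lie inside a common $90^\circ$ cone, so their convex combinations are nonzero.

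Finally, I would verify that $F$ maps $[0,n]^2$ into itself for small $\varepsilon$. The boundary constraint of $\psi_n$ says that the arrows of boundary cells do not point outside the square, so averaging and linear interpolation produce a $V$ whose outward normal component on $\partial[0,n]^2$ is nonpositive (at the four corners the single cell arrow even points strictly inward). A sufficiently small $\varepsilon$ then keeps $F(x)\in[0,n]^2$ for every $x$. Brouwer's theorem now supplies a fixed point $x^{\ast}$, forcing $V(x^{\ast})=0$ and contradicting the previous paragraph. The main obstacle is the combinatorial $90^\circ$-arc claim at an interior vertex; once that is established, the continuity of $V$, the invariance under $F$, and the appeal to Brouwer are all routine.
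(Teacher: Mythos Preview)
Your Brouwer-based route is sound in outline and is indeed what the paper alludes to in the main text, but the proof the paper actually writes out (in the appendix) is different and more elementary: a discrete winding-number argument. One defines the signed ``total rotation'' along any closed cell-path, proves by a halving induction that a rectangle whose boundary carries nonzero total rotation must contain two edge-adjacent cells with arrows more than $45^\circ$ apart, and then checks from the boundary constraints that the $n\times n$ square has total rotation exactly $360^\circ$. This avoids Brouwer altogether and, as a dividend, immediately gives the $O(n)$-depth contradiction search tree used later. Your approach, by contrast, packages the obstruction as a nonvanishing vector field and invokes a fixed-point theorem; this is conceptually clean but does not yield the algorithmic upper bound.

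There is, however, a real gap in your nonvanishing argument. You assert that ``all arrow vectors feeding into such a triangle lie in a common $90^\circ$ arc,'' and support it with the (correct) observation that the four arrows around a single grid vertex span at most $90^\circ$. But a triangle in your subdivision has \emph{two} grid-vertex corners together with the cell center, and the six cells touching those two corners can have arrows spanning a full $180^\circ$: the two diagonal neighbours of the central cell may sit at $+90^\circ$ and $-90^\circ$ from $\vec v_{ij}$. So the $90^\circ$-arc claim, as stated, is false. The repair is short: every arrow among those six lies within $90^\circ$ of $\vec v_{ij}$, hence has nonnegative $\vec v_{ij}$-component, and each corner average $V(p)$ includes $\vec v_{ij}$ itself with weight at least $\tfrac14$, so $V(p)\cdot\vec v_{ij}\ge\tfrac14>0$. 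Thus $V(c),V(p),V(q)$ all lie in the open half-plane $\{w:w\cdot\vec v_{ij}>0\}$, and their convex hull avoids the origin. With this fix (and the Lipschitz bound on $V$ that makes $F$ a self-map for small $\varepsilon$, which you handle correctly), your proof goes through.
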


\begin{corollary} 
\label{arrowsapper}
    For the $\CSP$ $\psi_n$ there exists a contradiction search tree of depth $O(n)$ and therefore of size $2^{O(n)}$.
\end{corollary}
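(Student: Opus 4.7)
The plan is to build a decision tree of depth $O(n)$; the size bound $2^{O(n)}$ is then automatic because the arrow alphabet has $8$ elements, so depth $d$ gives at most $8^d = 2^{3d}$ leaves. The approach is divide-and-conquer on rectangular cell-regions of the grid, guided by a winding-number invariant. For a cycle $c_0, c_1, \ldots, c_\ell = c_0$ of grid-adjacent cells whose arrows satisfy the $45^\circ$ constraint at every consecutive pair, let $\delta_i \in [-\pi/4, \pi/4]$ be the signed angle from the arrow at $c_i$ to the arrow at $c_{i+1}$, and define $w := \frac{1}{2\pi}\sum_i \delta_i$. Since the arrow returns to its starting value, $w$ is an integer with $|w| \le \ell/8$. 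I apply this to the counterclockwise outer cycle $\partial R$ of a rectangular region $R$ whose sides are both at least $2$.

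Two facts drive the recursion. \emph{(Additivity)} If $R$ is split by a bisecting column $C^*$ into overlapping halves $R_1, R_2$ (both containing $C^*$), then $w(\partial R) = w(\partial R_1) + w(\partial R_2)$: the contributions of $C^*$ cancel because it is traversed upward in $\partial R_1$ and downward in $\partial R_2$. \emph{(Short-cycle rigidity)} If $|\partial R| < 8$ and all adjacency constraints on $\partial R$ hold, then $|w| < 1$ forces $w = 0$; contrapositively, a short-boundary region with $w \ne 0$ must contain a violated adjacency among its (queried) boundary cells.

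The algorithm is as follows. (i) Query the $4(n-1)$ outer boundary cells and check all adjacencies; stop on any violation. Otherwise, a short case analysis -- each corner is confined to a $90^\circ$ wedge of three admissible directions, each side forces its net angular change into $[0, \pi]$, the four side-changes sum to some $2\pi k \in [0, 4\pi]$, and only $k = 1$ is compatible with the corner wedges (the cases $k = 0, 2$ both collapse the corners to incompatible values) -- yields $w(\partial R) = 1$ for the full square. (ii) Given $R$ with queried boundary of nonzero winding: if $|\partial R| < 8$, apply rigidity to locate a violated boundary adjacency; otherwise bisect $R$ along its longer dimension, query the new bisecting column (of length at most the shorter side), check all newly relevant constraints, and stop on any violation. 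Otherwise compute $w(\partial R_1)$ and set $w(\partial R_2) := w(\partial R) - w(\partial R_1)$ by additivity; recurse on whichever sub-rectangle has nonzero winding. Initialization costs $O(n)$; thereafter the bisecting-column lengths follow the sequence $n, n/2, n/2, n/4, n/4, \ldots$ (halving the longer side alternately shrinks each dimension), which sums to $O(n)$. The total depth is therefore $O(n)$.

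The main obstacle I anticipate is the clean verification of additivity: the two sub-cycles share the bisecting column but traverse it in opposite directions, and one must confirm that the signed-angle contributions really cancel rather than leaving a residue at the two endpoints where $C^*$ meets $\partial R$. A secondary delicate point is the initial-winding computation, which requires the corner-wedge and per-side range analysis above to single out $w(\partial R) = 1$ as the only feasible value compatible with the boundary constraints.
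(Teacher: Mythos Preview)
Your proposal is correct and follows essentially the same approach as the paper: define a winding number (the paper calls it ``total rotation''), show it is additive under splitting a rectangle by a row or column, verify the outer boundary has winding one, and recurse on the half with nonzero winding, so that the total number of queried cells is a geometric series summing to $O(n)$. The paper's argument for the initial winding is phrased slightly differently---it fixes two opposite corners $A,B$ and bounds the rotation along each of the two boundary arcs from $A$ to $B$ separately---but this is only a cosmetic difference from your side-by-side corner-wedge analysis.
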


For proofs of Theorem~\ref{arrow-impossible} and Corollary~\ref{arrowsapper} see Appendix~\ref{appendixarrow}.

\begin{theorem}
\label{arrow-thm}
	The size of any contradiction search tree for $\CSP$ $\psi_n$ is
    $2^{\Omega(n)}$, and therefore the depth is $\Omega(n)$.
\end{theorem}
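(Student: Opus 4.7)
The plan is to follow the same strategy as in the proof of Lemma~\ref{bob-sperner-lemma}: using Lemma~\ref{decision-tree-lemma}, I would design an adversary (Bob) strategy that forces $\Omega(n)$ ChooseAny answers against any strategy of Alice, which immediately gives the $2^{\Omega(n)}$ lower bound on the contradiction search tree and hence the $\Omega(n)$ lower bound on its depth.

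First, I would fix a ``hidden'' region inside the $n\times n$ square, analogous to the parallelogram $H$ in the Sperner case, together with a surrounding buffer of cells that Alice has not yet touched. Outside this region the arrows will be set to a fixed canonical pattern whose boundary arrows all point strictly inward (so that constraint type 2 is satisfied) and whose interior arrows rotate smoothly toward the hidden region with adjacent arrows differing by at most $45^\circ$. The hidden region is precisely where the non-trivial winding/index concentrates; the topological argument underlying Theorem~\ref{arrow-impossible} guarantees that, as long as the boundary arrows surrounding the hidden region carry non-zero total turning, no consistent extension exists inside, while every individual constraint along the current boundary is still satisfied. In particular, Bob's partial assignment is legal at every step.

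Next, I would specify Bob's reaction to a query. If Alice asks about an already-assigned cell, Bob just returns its value. If Alice asks about a cell strictly inside the hidden region and outside the buffer, Bob plays ChooseAny among the (at least two) arrow directions that remain consistent with the local boundary of the hidden region; the canonical pattern is designed so that such cells always have at least two locally legal extensions. If Alice queries a cell in the buffer, Bob must first shift the hidden region by $O(1)$ cells in the appropriate direction to restore the buffer invariant, fill in the newly exposed strip using the canonical pattern, and then answer ChooseAny on the queried cell from the directions allowed by the new boundary. As in the Sperner proof, the canonical pattern is chosen so that each shift is legal (no constraint is falsified) and so that the new hidden region still carries the same winding obstruction. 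Each buffer query either produces a ChooseAny directly or triggers a shift that itself consumes a ChooseAny, and every shift ``spends'' only a constant-size strip of the $n\times n$ square; thus Bob can survive $\Omega(n)$ shifts before the hidden region is finally pushed against the side of the square, yielding the desired $\Omega(n)$ ChooseAny answers.

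The main obstacle will be the combinatorial design of the canonical pattern and of the shift rule. Unlike the three-color Sperner case, the eight arrow directions together with the $45^\circ$ adjacency constraint give a much richer local structure, and one must simultaneously guarantee (a) that the partial assignment outside the hidden region extends to a globally consistent configuration, (b) that every cell where ChooseAny is invoked really has at least two locally legal values, and (c) that the boundary of the hidden region after any shift still carries the non-zero winding number that makes the topological obstruction of Theorem~\ref{arrow-impossible} applicable. This geometric/combinatorial design, rather than any probabilistic or counting step, is where the bulk of the technical work will lie.
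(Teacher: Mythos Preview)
Your high-level plan---reduce to Lemma~\ref{decision-tree-lemma} via an adversary strategy for Bob guaranteeing $\Omega(n)$ ChooseAny moves---is exactly what the paper does (it is packaged as Lemma~\ref{bob-arrow-lemma}, from which the theorem follows in one line). Where your sketch diverges is in the design of Bob's strategy. You import the two-dimensional hidden-region-plus-buffer picture from the Sperner argument and propose to maintain a winding-number invariant on its boundary. The paper's construction is simpler and essentially one-dimensional: the square is cut into vertical $n\times 8$ \emph{slots}, the buffer is a single $n\times 4$ strip that only migrates leftward, and the decisive device is a pair of explicit $4$-column fill patterns for the old buffer (Figure~\ref{pic:middle-arrow}) that disagree on every cell of a suitably chosen empty horizontal strip---this is precisely what manufactures a ChooseAny each time the buffer is relocated. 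The $\Omega(n)$ count is then a pure slot-accounting argument (each ChooseAny to the left of the buffer touches at most two slots, so exhausting all slots costs $\Omega(n)$ ChooseAny answers), with no topology invoked.

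Two corrections to your outline. First, condition~(a) as you state it---that the partial assignment ``extends to a globally consistent configuration''---is impossible, since $\psi_n$ is unsatisfiable; what Bob must maintain, and what the paper does maintain, is merely that no constraint is \emph{currently} falsified by the partial substitution. Second, the winding-number obstruction you want to carry around the hidden region is not needed for the lower bound: Bob's task is not to confine the eventual contradiction to a region, only to keep his own partial substitution constraint-free for as long as possible, so that no leaf of Alice's tree can yet be reached. The winding argument belongs to the unsatisfiability proof and the \emph{upper} bound (Corollary~\ref{arrowsapper}), not to this lower bound.
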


The proof of Theorem~\ref{arrow-thm} follows from Lemma~\ref{decision-tree-lemma} and following one.

\begin{lemma}
\label{bob-arrow-lemma}
	For $\CSP$ $\psi_n$ there exists such a strategy of Bob that Bob gives
    $\Omega(n)$ ChooseAny answers for any strategy of Alice.
\end{lemma}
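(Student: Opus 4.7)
The plan is to mimic the strategy from Lemma~\ref{bob-sperner-lemma}, with the geometric setup adapted to the $n\times n$ square and to the arrow alphabet $W$. First I would fix two ``safe'' arrow patterns that together cover the square except for a thin conflict region: the part above a horizontal path $P$ is filled uniformly with $\au$ (with a short transition band of at most two rows near the top boundary, made of horizontal arrows, so that no arrow points out of the square), and the part below $P$ is filled symmetrically with $\ad$ plus its transition near the bottom. Each of these two partial assignments is consistent with all constraints of $\psi_n$ it touches, but a cell on $P$ itself is simultaneously adjacent to $\au$ above and $\ad$ below, and those differ by $180^\circ$; so $P$ has no admissible assignment, and that is where the forced contradiction is hidden. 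Around $P$ Bob keeps a buffer strip of bounded width in which no cell has yet been committed by his current partial substitution $\rho$.

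Bob's strategy then maintains invariants entirely analogous to those of the Sperner proof: (i) every cell outside the buffer is assigned by $\rho$ according to the current safe pattern, so all constraints involving only assigned cells are satisfied; (ii) all unassigned cells lie in the buffer or on $P$; (iii) the transition rows near the top and bottom are installed so that all boundary constraints hold. When Alice queries a cell outside the buffer, Bob returns its pre-assigned value; when she queries a cell inside the buffer, Bob answers ChooseAny on the (at least two) arrow values still compatible with the neighbors already fixed by $\rho$, then extends $\rho$ by filling surrounding cells with the forced safe-pattern values, and, if the query encroaches too far, slides $P$ into a still-unspoiled vertical slot, exactly as in the buffer-shifting procedure of Section~\ref{sectsperner}.

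The counting argument is then identical: the square is divided along $P$ into $\Theta(n)$ constant-width slots, each Alice query can spoil only $O(1)$ of them, and each slide of $P$ forces at least one ChooseAny. Since Bob can keep sliding until every slot is spoiled, he answers ChooseAny $\Omega(n)$ times against any strategy of Alice, and Lemma~\ref{decision-tree-lemma} finishes the proof. The main obstacle will be purely combinatorial bookkeeping: with only eight arrow values and $45^\circ$ of slack between neighbors, a query deep in the buffer constrains the admissible values of surrounding cells quite rigidly, so one must verify that after any ChooseAny answer the safe $\au$/$\ad$ pattern together with its boundary transition rows can always be extended without violating a constraint that was already satisfied; in contrast with the Sperner case, where the three-color palette leaves more slack, this check needs the $8$-cycle structure of $W$ to be handled carefully, particularly near the corners of the square.
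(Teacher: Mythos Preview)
There is a concrete gap in your geometry. With the region above $P$ committed to $\au$, the region below committed to $\ad$, and only a bounded-height horizontal strip around $P$ left unassigned, Alice ends the game after $O(1)$ ChooseAny answers by attacking a single column $j$. The cell just above the buffer in column $j$ already carries $\au$ and the one just below already carries $\ad$; she then walks inward through the $O(1)$ buffer cells of column $j$, choosing $\au$ on the upper side and $\ad$ on the lower side whenever you offer ChooseAny. After buffer-height many steps she reaches a cell whose upper neighbor is $\au$ and lower neighbor is $\ad$; no arrow lies within $45^\circ$ of both, a constraint is falsified, and the game ends. ``Sliding $P$ into an unspoiled vertical slot'' cannot help: the cells outside the buffer in column $j$ are permanently committed, so nothing you do elsewhere removes the trap in that column. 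The phrase itself is geometrically inconsistent---a horizontal interface cannot be slid into a vertical slot---and the mechanism is never made precise. Your counting claim that each query spoils only $O(1)$ slots is therefore beside the point: Alice needs only one column.

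The paper's strategy avoids this by orienting the buffer \emph{vertically}. Bob keeps a width-$4$ vertical strip empty; everything to its right is completely filled, with the column immediately right of the buffer uniformly $\al$, while cells Alice touches to the left are answered with $\ar$ (or a $45^\circ$ variant) and their neighbors forced to $\ar$. When Alice queries inside the buffer, Bob jumps the buffer leftward to the nearest still-empty width-$8$ slot and fills the old buffer with one of two explicit $4$-column patterns; each pattern transitions consistently from the $\ar$ region on its left to the $\al$ column on its right, yet the two disagree in every cell of the old buffer, which is exactly what licenses ChooseAny at the queried cell. The crucial difference from your plan is that the two offered values correspond to two \emph{global} consistent extensions of the current partial assignment, exhibited cell by cell, not merely two values compatible with immediate neighbors. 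The $\Omega(n)$ count then follows because there are $\Theta(n)$ vertical slots, each query spoils at most two of them, and the buffer only skips over already-spoiled slots.
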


\begin{proof}
Starting from the left side Bob mentally split the square on vertical strips
 $n \times 8$ that we will call \textit{slots}. Since $n$ may not be divisible by $8$, 
 there may be a strip of size at most $n \times 7$, which remains unsplited.

Let us divide a slot into two strips of size $n \times 4$. We say that left one is a buffer position.
At the beginning of the game Bob divides the rightmost slot into to strips $n \times 4$ and chooses left one.
We call that strip as a buffer. Bob fills all cells, which lie to the right of the buffer, by $\al$.

During the game Bob may move the buffer, but the new position for the buffer is determined in the same way.
Bob divides a slot in two parts and chooses left one.

    
Bob maintains the following invariants:
1) The buffer is empty. 
2) The area, to the right of the buffer, is filled completely. 
The leftmost strip $n \times 1$ to the right of the buffer is filled with $\al$.
3) All neighbours of empty cells are either empty or filled by $\ar$. 
4) Let $k$ be the number of slots to the right of the buffer. Then Bob gave at least $\frac{k}{2}$ answers ChooseAny 
for cells to the right of the buffer.
5) No constraints are falsified.

Note that all invariants are hold at the begining.

The Bob's strategy has the following type:
1) If Alice requests already filled cell, then Bob returns its value.
2) If Alice requests empty cell, then Bob moves ChooseAny and probably fill some cells in order to maintain invariant.
3) If it is impossible to maintain invariants, Bob moves in arbitrary way.

We specify the second step and prove that if Bob fails to maintain invariants, then he has already made $\Omega(n)$ ChooseAny answers. 
    
Let Alice request an empty cell $c$. We consider three cases:
1) the cell $c$ is to the left of the buffer and does not share edges with cells from the buffer;
2) the cell $c$ is in the buffer;
3) the cell $c$ is in the $n \times 1$ strip that is adjacent to and to the left of the buffer.

In the first case Bob answers ChooseAny($\ar$, $\ard$) if $c$ is on the upper boundary and ChooseAny($\ar$, $\aru$), otherwise.

Bob assigns $\ar$ to empty neighbors of	$c$. It is easy to see that such assignments do not violate invariants.

In the second case $c$ lies in the buffer. We consider two cases
a) there are no empty slots left of the buffer;
b) there is an empty slot left of the buffer.

    a) Assume that there is no empty slot left of the buffer. 
Every of $t$ slots left of the buffer has a filled cell. Since for any request to an empty cell left of the buffer 
Bob returns ChooseAny and 
assigns values for no more than two slots, Bob has already made at least $\frac{t}{2}$ ChooseAny answers. Also Bob has made
$\frac{k}{2}$ ChooseAny answers for cells to the right of the buffer, where $k$ is the number of slots to the right of the buffer.
Finally, Bob has made $\frac{t + k}{2} = \Omega(n)$ ChooseAny answers.

  b) If there is an empty slot left of the buffer, Bob chooses the rightmost one and denote it by $S$. 
  He divides the slot into two strips $n \times 4$ and chooses left one for new buffer $B$.
  We denote the old buffer by $B'$.  
We denote the area between $S$ and $B'$ by $M$. Bob is looking for an empty horizontal strip $H$ of height $8$ in $M$, 
that is not adjacent to lower and upper boundary.
Note that if there is no such a strip, then Bob has made $\Omega(n)$ answers ChooseAny for cells in $M$.
 
Bob assigns $\ar$ to all empty cells from $M \setminus H$ and
fills the parts $S \setminus B$ and $M\cap H$ as it is shown in Figure~\ref{pic:middle-arrow}.

We split the old buffer $B'$ into three parts: above, on, and below strip $H$. 
Bob assings cells from $B'$ in two ways: with Pattern~$1$ or Pattern~$2$, shown on Figure~\ref{pic:middle-arrow}. 
Note that these two patterns has the following properties: there is no cell in strip $H$, which is filled with the same arrow in both patterns.
So Bob can answer ChooseAny suggesting two variants according Patterns $1$ or $2$.

    \begin{figure}[h]
		\center{\scalebox{0.6}{\vspace{0.3cm}
    \tikzstyle{arr}=[xshift=7, yshift=7]
    \newcommand{\darr}[3]{\node[arr] at (#2 / 2 - 0.5, #1 / 2 - 0.5) {$#3$}}
    \newcommand{\darrl}[5]{\darr{#1}{-1}{#2};\darr{#1}{0}{#3};\darr{#1}{1}{#4};\darr{#1}{2}{#5};}
    \newcommand{\darrs}[5]{\darr{#1}{-10}{#2};\darr{#1}{-9}{#3};\darr{#1}{-8}{#4};\darr{#1}{-7}{#5};}

    \begin{tikzpicture}[scale=1]
        \begin{scope}
            \draw[step=.5cm,style=help lines] (-1.9,-4.5) grid (1.4,-3.6);
            \draw[step=.5cm,style=help lines] (-1.9,-2.9) grid (1.4, 2.9);
            \draw[step=.5cm,style=help lines] (-1.9,3.6) grid (1.4, 4.5);
            
            \draw[very thick] (-1,-2.9) -- (-1,2.9);
            \draw[very thick] (-1,-4.5) -- (-1,-3.6);
            \draw[very thick] (-1,4.5) -- (-1,3.6);
            \draw[very thick] (1,-2.9) -- (1,2.9);
            \draw[very thick] (1,-4.5) -- (1,-3.6);
            \draw[very thick] (1,4.5) -- (1,3.6);
            \draw[very thick] (-1.9,2) -- (1,2);
            \draw[very thick] (-1.9,-2) -- (1,-2);

            \foreach \i in {-15,...,-5,-3,-2,...,3} {
                \darr{7}{\i}{\cdots};
                \darr{-6}{\i}{\cdots};
            }
            \foreach \i in {-8,-7,-5,-4,...,6,8,9} {
                \darr{\i}{-4}{\vdots};
            }

            \draw[step=.5cm,style=help lines] (-7.9,-4.5) grid (-2.6,-3.6);
            \draw[step=.5cm,style=help lines] (-7.9,-2.9) grid (-2.6, 2.9);
            \draw[step=.5cm,style=help lines] (-7.9,3.6) grid (-2.6, 4.5);
            
            \draw[very thick] (-3.5,2) -- (-2.6,2);
            \draw[very thick] (-3.5,-2) -- (-2.6,-2);
            \draw[very thick] (-5.5,-2.9) -- (-5.5,2.9);
            \draw[very thick] (-5.5,-4.5) -- (-5.5,-3.6);
            \draw[very thick] (-5.5,4.5) -- (-5.5,3.6);
            
            \draw[very thick] (-3.5,-2.9) -- (-3.5,-2);
            \draw[very thick] (-3.5,2) -- (-3.5,2.9);
            \draw[very thick] (-3.5,-4.5) -- (-3.5,-3.6);
            \draw[very thick] (-3.5,4.5) -- (-3.5,3.6);
            
            \draw[very thick] (-7.5,-2.9) -- (-7.5,2.9);
            \draw[very thick] (-7.5,-4.5) -- (-7.5,-3.6);
            \draw[very thick] (-7.5,4.5) -- (-7.5,3.6);
            
            \foreach \i in {-6,-5,-3,-2} {
                \foreach \j in {9,8,6,5,-4,-7,-5,-8} {
                    \darr{\j}{\i}{\ar};
                }
                \darr{4}{\i}{\ard};
                \darr{3}{\i}{\ad};
                \darr{2}{\i}{\ald};
                \darr{1}{\i}{\al};
                
                \darr{-3}{\i}{\aru};
                \darr{-2}{\i}{\au};
                \darr{-1}{\i}{\alu};
                \darr{0}{\i}{\al};
            }
            
            \darrs{9} {\al}{\ald}{\ad}{\ard}
            \darrs{8} {\al}{\ald}{\ad}{\ard}
            \darrs{6} {\al}{\ald}{\ad}{\ard}
            \darrs{5} {\al}{\ald}{\ad}{\ard}
            
            \darrs{4} {\al}{\ald}{\ad}{\ard}
            \darrs{3} {\al}{\ald}{\ad}{\ad}
            \darrs{2} {\al}{\ald}{\ald}{\ald}
            \darrs{1} {\al}{\al}{\al}{\al}
            \darrs{0} {\al}{\al}{\al}{\al}
            \darrs{-1}{\al}{\alu}{\alu}{\alu}
            \darrs{-2}{\al}{\alu}{\au}{\au}
            \darrs{-3}{\al}{\alu}{\au}{\aru}
            
            \darrs{-4} {\al}{\alu}{\au}{\aru}
            \darrs{-5} {\al}{\alu}{\au}{\aru}
            \darrs{-7} {\al}{\alu}{\au}{\aru}
            \darrs{-8} {\al}{\alu}{\au}{\aru}
            
            \foreach \i in {9,8,6,5,...,-5,-7,-8} {
                \darr{\i}{3}{\al};
            }
            
            \draw[snake=brace, mirror snake, segment amplitude=2mm] (-1, -4.6) -- ++(2, 0);
            \draw (0, -5) node {\bf  Old Buffer};
            
            \draw[snake=brace, mirror snake, segment amplitude=2mm] (-7.5, -4.6) -- ++(2, 0);
            \draw (-6.5, -5) node {\bf  New Buffer};
            
            \draw (0, 0) node[rotate=70] {\bf \large Pattern 1\slash2};
        \end{scope}

        \begin{scope}[xshift=4cm]
            \draw[step=.5cm,style=help lines] (-1,-4.5) grid (1,-3.6);
            \draw[step=.5cm,style=help lines] (-1,-2.9) grid (1, 2.9);
            \draw[very thick] (-1,-2) rectangle (1,2);
            \draw[step=.5cm,style=help lines] (-1,3.6) grid (1, 4.5);
            
            \draw[snake=brace, mirror snake, segment amplitude=2mm] (-1, -4.6) -- ++(2, 0);
            \draw (0, -5) node {\bf Pattern 1};
            
            \darrl{9}{\ar}{\ard}{\ad}{\ald};
            \darrl{8}{\ar}{\ard}{\ad}{\ald};
            \darrl{7}{\cdots}{\cdots}{\cdots}{\cdots};
            
            \darrl{6}{\ar}{\ard}{\ad}{\ald};
            \darrl{5}{\ar}{\ard}{\ad}{\ald};;
            \darrl{4} {\ard}{\ad}{\ald}{\ald};
            \darrl{3} {\ad}{\ald}{\ald}{\ald};
            \darrl{2} {\ald}{\ald}{\ald}{\ald};
            \darrl{1} {\ald}{\ald}{\ald}{\ald};
            
            \darrl{0} {\al}{\al}{\al}{\al};
            \darrl{-1}{\alu}{\al}{\al}{\al};
            \darrl{-2}{\au}{\alu}{\al}{\al};
            \darrl{-3}{\aru}{\au}{\alu}{\al};
            \darrl{-4}{\ar}{\aru}{\au}{\alu};
            \darrl{-5}{\ar}{\aru}{\au}{\alu};
            \darrl{-6}{\cdots}{\cdots}{\cdots}{\cdots};
            
            \darrl{-7}{\ar}{\aru}{\au}{\alu};
            \darrl{-8}{\ar}{\aru}{\au}{\alu};
        \end{scope}
        
        \begin{scope}[xshift=7cm]
            \draw[step=.5cm,style=help lines] (-1,-4.5) grid (1,-3.6);
            \draw[step=.5cm,style=help lines] (-1,-2.9) grid (1, 2.9);
            \draw[very thick] (-1,-2) rectangle (1,2);
            \draw[step=.5cm,style=help lines] (-1,3.6) grid (1, 4.5);
            
            \draw[snake=brace, mirror snake, segment amplitude=2mm] (-1, -4.6) -- ++(2, 0);
            \draw (0, -5) node {\bf Pattern 2};
            
            \darrl{9}{\ard}{\ad}{\ald}{\al};
            \darrl{8}{\ard}{\ad}{\ald}{\al};
            \darrl{7}{\cdots}{\cdots}{\cdots}{\cdots};
            
            \darrl{6}{\ard}{\ad}{\ald}{\al};
            \darrl{5}{\ard}{\ad}{\ald}{\al};
            \darrl{4} {\ad}{\ald}{\al}{\al};
            \darrl{3} {\ald}{\al}{\al}{\al};
            \darrl{2} {\al}{\al}{\al}{\al};
            \darrl{1} {\al}{\al}{\al}{\al};
            
            \darrl{0} {\alu}{\alu}{\alu}{\alu};
            \darrl{-1}{\au}{\alu}{\alu}{\alu};
            \darrl{-2}{\aru}{\au}{\alu}{\alu};
            \darrl{-3}{\ar}{\aru}{\au}{\alu};
            \darrl{-4}{\aru}{\au}{\alu}{\al};
            \darrl{-5}{\aru}{\au}{\alu}{\al};
            \darrl{-6}{\cdots}{\cdots}{\cdots}{\cdots};
            
            \darrl{-7}{\aru}{\au}{\alu}{\al};
            \darrl{-8}{\aru}{\au}{\alu}{\al};
        \end{scope}
        
    \end{tikzpicture}
\vspace{0.3cm}}}
		\caption{Assignments between buffers, patterns for old buffer \label{pic:middle-arrow}}
    \end{figure}

It is easy to see that all invariants are hold and constraints are not violated after such assignments. 
Let $k$ be the number of slots that are to the right of buffer $B'$ and $s$ is the number of slots between old and new buffers.
There are $k + s + 1$ slots to the right of buffer $B$. 
Bob made at least  $\frac{k}{2}$ ChooseAny answers in the cells that are to the right of buffer $B'$, 
all $s$ slots between $B$ and $B'$ are nonempty. 
The first request to every of those slots got the answer ChooseAny 
and after recovering of the invariants at most two slots may be affected. Thus we have at least $\frac{s}{2}$
answers ChooseAny for these $s$ slots. Finally, the number of ChooseAny answers to the right of buffer $B'$ is at least $\frac{s}{2} + \frac{k}{2} + 1 >
    \frac{s + k + 1}{2}$.
    
In the third case $c$ is a left neighbour of the buffer. In this case Bob gives answers ChooseAny($\ar$, $\ard$) as it was in the first case. Bob has to fill the neighbours of cell $c$ by $\ar$, but some of them lie in the buffer.
Thus we move the buffer as in the second case, but fill it with Pattern~1 (see Figure~\ref{pic:middle-arrow}).
So we do not violate the constraints.
The number of ChooseAny answers are estimated as follows: $\frac{k}{2}$ answers to the right of $B'$ and at least
  $\frac{s + 1}{2}$ others (in contrast to the second case, there is no ChooseAny in $B'$ but Bob gave the ChooseAny answer in at least half of $s+1$ remaining slots). 
Thus there are at least $\frac{s + k + 1}{2}$ ChooseAny answers to the right of the new buffer $B$.
\end{proof}

\section{Arrows is $\PPAD$-complete}

In this section we revisit the problems from the previous one, but with another computational model.
In the first step we will give basic definitions of computational classes $\FNP$, $\TFNP$ and $\PPAD$.
It is already known that $\TFNP$-problem, based on Sperner's lemma, is $\PPAD$-complete~\cite{ChenD06}.
This section is devoted to $\TFNP$-problem, based on arrows' game. We denote it as $\ARROWS$ and give the strict definition below.
The main result, presented in this section, that $\ARROWS$ is $\PPAD$-complete.

\begin{definition}[$\FNP$ and $\TFNP$]
    Let $R \subset \Sigma^* \times \Sigma^*$ be a polynomial-time computable relation such that there exists a polynomial $p$ that for every $(x, y) \in R$ we have $y \leq p(|x|)$. The $\NP$ search problem $Q_R$ specified by $R$ is to given input $x \in \Sigma^*$ find $y \in \Sigma^*$ such that $(x, y) \in R$, if such $y$ exists, or return `no' otherwise. We use $\FNP$ to denote the class of $\NP$ search problems. An $\NP$ search problem is said to be total if for every $x$, there exists an $y$ such that $(x, y) \in R$. We use $\TFNP$ to denote the class of total $\NP$ search problems.
\end{definition}

\begin{definition}[Polynomial Reduction]
    A search problem $Q_{R_1} \in \TFNP$ is polynomial-time reducible to a search problem $Q_{R_2} \in \TFNP$ if there exists a pair of  polynomial-time computable functions $(f, g)$ such that for every input $x$ in $Q_{R_1}$, if $y$ satisfies $(f(x), y) \in R_2$, then $(x, g(y)) \in R_1$.
\end{definition}

\begin{definition}[$\LEAFD$]
    The input of the problem is a pair  $(M, 0^k)$ where $M$ is the description of a polynomial-time Turing machine which satisfies 
    \begin{enumerate}
        \item for every $v \in \{0, 1\}^k$, $M(v)$ is an ordered pair $(u_1, u_2)$ where $u_1, u_2 \in \{0, 1\}^k \cup \{ no \}$;
        \item $M(0^k) = (no, 1^k)$ and the first component of $M(1^k)$ is $0^k$.
    \end{enumerate}
    $M$ generates a directed graph $G = \langle V, E \rangle$ where $V = \{0, 1\}^k$ in the following way. An edge $uv$ appears in $E$ iff $v$ is the second component of $M(u)$ and $u$ is the first component of $M(v)$.
    The output is a directed leaf (in-degree + out-degree = 1) of graph $G$ which is different from $0^k$.
\end{definition}

$\PPAD$ \cite{Pap94} is the set of total $\NP$ search problems that are polynomial-time reducible to $\LEAFD$. By definition, $\LEAFD$ is complete for $\PPAD$.

For every $n \geq 1$, let $B_n = \{ \textbf{p} = (p_1, p_2) \in \mathbb{Z}^2 | 0 \leq p_1 < n \text{ and } 0 \leq p_2 < n \}$.

The boundary of $B_n$ is then the set of points $\textbf{p} \in B_n$ with $p_i \in \{0, n - 1\}$ for some $i \in \{1, 2\}$. For every $p \in \mathbb{Z}^2$ , we define $K_p = \{ q = (q_1 , q_2) \in \mathbb{Z}^2 | q_1 \in [p_1, p_1 + 1] \text{ and } q_2 \in [p_2, p_2 + 1]\}$, the $2 \times 2$ square with left bottom corner at $p$.

An arrow assignment of $B_n$ is a function $f$ from $B_n$ to $A = \{\au, \alu, \al, \ald, \ad, \ard, \ar, \aru\}$. It is said to be valid if for every $p \in B_n$ on the boundary $B_n$
\begin{enumerate}
    \item if $p_2 = 0$, then $f(p) \in \{\aru, \au, \alu\}$;
    \item if $p_2 = n - 1$, then $f(p) \in \{\ard, \ad, \ald\}$;
    \item if $p_1 = 0$, then $f(p) \in \{\ard, \ar, \aru\}$;
    \item if $p_1 = n - 1$, then $f(p) \in \{\ald, \al, \alu\}$.
\end{enumerate}
Notice that for every corner there is only one valid arrow.

\begin{definition}[$\ARROWS$]
The input of the problem $\ARROWS$ is a pair $(F, 0^k)$ where $F$ is the description of a polynomial-time Turing machine which generates a valid arrow assignment $f$ on $B_{2^k}$. Here $f(p) = F(p) \in A$ for every $p \in B_{2^k}$. The output is a pair of points $\textbf{(p, q)} \in B_{2^k}^2$ such that $p$ and $q$ are adjacent by edge and the angle between their arrows more than 45 degree.  
\end{definition}

In order to prove that $\ARROWS \in \PPAD$ we use $\BROUWERD$ problem which is very similar to $\ARROWS$. It is known that $\BROUWERD$ is $\PPAD$-complete \cite{ChenD06}.

A 3-coloring of $B_n$ is a function $g$ from $B_n$ to $\{0, 1, 2\}$. It is said to be valid if for every $\textbf{p}$ on the boundary of $B_n$,
\begin{enumerate}
    \item if $p_2 = 0$, then $g(p) = 2$;
    \item if $p_2 \neq 0$ and $p_1 = 0$, then $g(p) = 0$;
    \item otherwise, $g(p) = 1$.
\end{enumerate}
The search problem $\BROUWERD$ is then defined as follows.

\begin{definition}[$\BROUWERD$, \cite{ChenD06}]
The input of the problem $\BROUWERD$ is a pair $(F, 0^k)$ where $F$ is the description of a polynomial-time Turing machine which generates a valid 3-coloring $g$ on $B_{2^k}$. Here $g(p) = F(p) \in \{0, 1, 2\}$ for every $p \in B_{2^k}$. The output is a point $\textbf{p} \in B_{2^k}$ such that $K_p$ is trichromatic, that is, $K_p$ has all the three colors.
\end{definition}

\begin{lemma}
\label{arrows-ppad}
$\ARROWS \in \PPAD$.
\end{lemma}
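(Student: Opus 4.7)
The plan is to show $\ARROWS \in \PPAD$ by exhibiting a polynomial-time reduction from $\ARROWS$ to $\BROUWERD$, which the paper has just recalled is $\PPAD$-complete. Given an instance $(F, 0^k)$ of $\ARROWS$ with valid arrow assignment $f$ on $B_{2^k}$, I would produce two polynomial-time computable objects: a Turing machine $F'$ that defines a valid $3$-coloring $g$ on $B_{2^{k'}}$ with $k' = k + O(1)$, and a decoding map $\Phi$ that converts any trichromatic $2\times 2$ square of $g$ into an adjacent pair of points in $B_{2^k}$ whose arrows differ by more than $45^\circ$. Together, $(F', 0^{k'})$ and $\Phi$ witness the polynomial reduction in the sense of Definition of Polynomial Reduction.

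The central design question is how to turn arrows into colors. A single map $c\colon A \to \{0,1,2\}$ making same-coloured arrows $45^\circ$-compatible and differently-coloured arrows incompatible cannot exist: the compatibility graph of the eight arrows is a single $8$-cycle, and a cover of its vertex set by three cliques contains at most six vertices. To get around this, I would enlarge the grid by a small constant factor, representing each cell of $B_{2^k}$ by a $c \times c$ block in $B_{2^{k'}}$ and assigning the colouring inside each block as a rotationally symmetric function of the arrow at that cell. The block patterns would be designed so that: (i) no trichromatic $2\times 2$ square lies strictly inside a single block; (ii) at the interface of two neighbouring blocks whose arrows differ by at most $45^\circ$, no $2\times 2$ square meeting both blocks is trichromatic; and (iii) at the interface of two blocks with incompatible arrows, at least one such trichromatic square appears at a predictable location. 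Because on the $\ARROWS$ boundary the allowed arrows at each side all point inward, the blocks touching the outer frame of $B_{2^{k'}}$ will naturally carry the colours required by $\BROUWERD$; if the four corner conditions do not match directly, I would add a one-cell-wide explicitly coloured outer frame to enforce them without creating spurious trichromatic squares.

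The main obstacle is the finite but delicate case analysis establishing (ii) and (iii): for each unordered pair of arrows in $A$ and each of the two relative orientations of a neighbour (horizontal or vertical), one must verify the absence or presence of a trichromatic $2\times 2$ square at the block interface, and check that the patterns still mesh correctly at the four-way corners where four blocks meet. Once the block patterns are pinned down, the construction itself is routine: $F'$ on input $p$ computes the underlying $\ARROWS$ cell $q(p)$ in time $\poly(k)$, queries $F(q(p))$, and outputs the colour dictated by the pattern at the intra-block position of $p$; the decoder $\Phi$ reads the underlying blocks of the trichromatic square returned by the oracle and, by property (iii), outputs the corresponding pair of adjacent $\ARROWS$ cells whose arrows differ by more than $45^\circ$. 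This yields $\ARROWS \in \PPAD$.
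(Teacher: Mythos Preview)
Your high-level plan---reduce $\ARROWS$ to $\BROUWERD$---matches the paper's, but you take a detour that the paper avoids. You correctly observe that no map $c\colon A\to\{0,1,2\}$ can satisfy \emph{same colour $\Leftrightarrow$ within $45^\circ$}, and from this you conclude that a block expansion is unavoidable. The paper's insight is that this biconditional is not what is needed: the decoder only has to extract a violating edge-adjacent pair from a trichromatic $2\times2$ square, so it suffices that \emph{every trichromatic $2\times2$ contains such a pair}. That weaker property \emph{is} achievable by a single-cell map. The paper takes the three colour classes $\{\ad,\ald,\al\}$, $\{\ar,\ard\}$, $\{\aru,\au,\alu\}$; the key combinatorial fact is that no arrow lies within $45^\circ$ of arrows from \emph{both} other classes. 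In any trichromatic $2\times2$ (viewed as a $4$-cycle) some vertex has its two cycle-neighbours in the two other classes, and by the fact just stated at least one of those two edges is an arrow violation.

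So the paper's reduction is simply: add a one-cell frame of fixed inward-pointing arrows to force the $\BROUWERD$ boundary colouring, map arrows to colours cell by cell, and decode any trichromatic square back to an edge-adjacent arrow conflict. No block gadgets, no per-arrow pattern design, no interface case analysis. Your approach would likely go through once the $c\times c$ patterns are pinned down, but the ``finite but delicate case analysis'' you flag as the main obstacle is exactly what the paper's observation eliminates.
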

\begin{proof}
    We divide all arrows in three groups $\{\ad, \ald, \al\}$, $\{\ar, \ard\}$, $\{\aru, \au, \alu\}$ and assign each group value $0$, $1$, and $2$, respectively.

    \begin{figure}[h]
        \center{\scalebox{0.6}{\vspace{0.3cm}
{
    \tikzstyle{arr}=[xshift=7, yshift=7]
    \newcommand{\darr}[3]{\node[arr] at (#2 / 2, #1 / 2) {$#3$}}
    \newcommand{\darrl}[9]{\darr{#1}{0}{#2};\darr{#1}{1}{#3};\darr{#1}{2}{#4};\darr{#1}{3}{#5};\darr{#1}{4}{#6};\darr{#1}{5}{#7};\darr{#1}{6}{#8};\darr{#1}{7}{#9};}
    
    \newcommand{\xstep}{8cm}
    \newcommand{\ystep}{6cm}
    
    \begin{tikzpicture}[scale=1]
        \begin{scope}
        
            \draw[line width=0.05cm, fill=gray!30!white] (0.5, 0.5) rectangle (4.0, 4.0);
            \draw[step=.5cm,style=help lines] (0, 0) grid (4.5, 4.5);

            \darr{0}{0}{\aru};
            \darr{8}{0}{\ard};
            \darr{0}{8}{\alu};
            \darr{8}{8}{\ald};
            \foreach \x in {1, ..., 7} {
                \darr{\x}{0}{\ar};
                \darr{\x}{8}{\al};
                \darr{0}{\x}{\au};
                \darr{8}{\x}{\ad};
            }

        \end{scope}

    \end{tikzpicture}
}
\vspace{0.3cm}}}
        \caption{Square boundary \label{pic:bound-gadget}}
    \end{figure}

    Now we surround the current square by arrows as it is shown on Figure~\ref{pic:bound-gadget}, and convert new instance to $\BROUWERD$ problem replacing arrows by values $0$, $1$, and $2$ as they are divided on groups above.

    Notice that built coloring is valid, and there should be a trichromatic square $K_p$, which will be an output of $\BROUWERD$ problem. By construction square $K_p$ can not touch the border of the big square.

    On the other hand if square $2 \times 2$ is trichromatic then there should be two cells in it whose arrows directions differ 
on more than 45 degree. The latter can be easily checked since 
there is no arrow in one of the groups that is close up to 45 degrees to some arrows in two other groups.
\end{proof}

To prove $\PPAD$-hardness we use planar version of $\LEAFD$ problem.

For every $n \geq 1$ let us denote $V_n = \{ u = (u_1, u_2) \,|\, 0 \leq u_1 < n \text{ and } 0 \leq u_2 < n \}$.

\begin{definition}[$\RLEAFD$]
The input instance is a pair $(K, 0^k)$ where $K$ is the description of a polynomial-time Turing machine which satisfies:
\begin{enumerate}
    \item For every $u \in V_{2^k}$  $K(u)$ is an ordered pair $(u_1, u_2)$ where $u_1, u_2 \in V_{2^k} \cup \{no\}$;
    \item $K((0, 0))$ = $(no,(1, 0))$ and the first component of $K((1, 0))$ is $(0, 0)$.
\end{enumerate}
$K$ generates a directed graph $G = (V_{2^k}, E)$ on the grid in the following way. 
$(u,v)$ is an edge iff $v$ is the second component of $K(u)$, $u$ is the first component of $K(v)$ 
and $|u_1 - v_1| + |u_2 - v_2| = 1$.

The output is a directed leaf (with in-degree + out-degree = 1) of graph $G$ which is different from the origin $(0, 0)$.
\end{definition}

Chen and Deng proved that $\RLEAFD$ problem is $\PPAD$-complete \cite{ChenD06}.

\begin{lemma}
\label{arrows-ppad-hard}
$\ARROWS$ is $\PPAD$-hard.
\end{lemma}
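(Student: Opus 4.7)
My plan is to reduce $\RLEAFD$ to $\ARROWS$. Given an instance $(K, 0^k)$ of $\RLEAFD$ with graph $G$ on $V_{2^k}$, I build an instance $(F, 0^{k'})$ of $\ARROWS$ on $B_{2^{k'}}$, where $k' = k + c$ for a fixed constant $c$, so that each grid vertex $u \in V_{2^k}$ corresponds to a square block $B_u$ of side $L = 2^c$ inside $B_{2^{k'}}$. By the definition of $\RLEAFD$, every vertex of $G$ has at most one incoming and at most one outgoing edge, so $G$ is a disjoint union of simple directed paths and cycles, and a directed leaf is exactly a path endpoint. The edges of $G$ incident to a fixed vertex $u$ are determined by querying $K$ on $u$ and its four grid neighbors, which takes polynomial time.

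Inside each block $B_u$ I place an arrow pattern that depends only on the local incidence pattern of $G$ at $u$. For an isolated vertex of $G$ the pattern is a fixed \emph{default template}, chosen so that two horizontally or vertically adjacent default blocks meet without violating the $45^\circ$ constraint. When $u$ has two incident edges, so that $u$ is an interior vertex of a path or cycle, I draw a ``trail'' entering one side of $B_u$ and leaving another; there are only six shapes (horizontal pass-through, vertical pass-through, and four corner turns). Along the trail I use the intermediate arrows $\aru,\ard,\ald,\alu$ to interpolate between the two default half-patterns on either side, so that inside $B_u$ every pair of neighbouring cells differs by at most $45^\circ$ and along $\partial B_u$ the arrows agree with whatever the neighbouring block exposes. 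When $u$ is a leaf of $G$, the trail entering $B_u$ must terminate in the interior; the design forces the two arrows immediately on either side of this dead end to come from opposite sides of the interpolation, producing the required pair of adjacent cells whose arrows differ by more than $45^\circ$.

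To respect the $\ARROWS$ boundary constraints, the default template and all block gadgets touching $\partial B_{2^{k'}}$ use only arrows allowed by the boundary rules; the four corner arrows are forced, and along each side there are three valid choices, so this is just a finite consistency check on the gadget library. Since the origin $(0,0)$ is always a leaf of $G$, the block $B_{(0,0)}$ in the bottom-left corner is treated specially: the single trail incident to the origin exits the block through the grid boundary, and the mandatory corner arrow $\aru$ together with the rest of the forced boundary pattern absorb the dead end without producing an interior violation. The recovery map $g$ takes a pair $(p, q)$ of adjacent cells with arrows differing by more than $45^\circ$, identifies the block $B_u$ in which the violation lies, and returns $u$; the construction guarantees that $u$ is a non-origin leaf of $G$.

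The main obstacle is the combinatorial gadget design: one must specify the default template, the six pass-through gadgets, and the four leaf gadgets explicitly, and then verify cell by cell that every interior adjacency within a block, every adjacency across block boundaries, and every boundary cell of $B_{2^{k'}}$ obeys the $45^\circ$ rule except at the designated dead ends. A natural approach is to pick a default template in which each cell's arrow is $45^\circ$ off from its neighbours, so that a trail can be realised by locally rotating the arrows by one or two steps along a narrow corridor, and termination of a trail in the interior necessarily leaves two arrows at a $90^\circ$ mismatch. Once the gadget library is fixed, the polynomial-time computability of $F$ from $K$ is immediate, and the correctness of $g$ reduces to the finite case analysis encoded in the gadget definitions.
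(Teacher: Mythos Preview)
Your plan is essentially the paper's proof: both reduce $\RLEAFD$ by replacing each grid vertex with a constant-size block whose arrow pattern is determined by the local incidence of $G$, with a special border gadget absorbing the trail at the origin so that the only $>45^\circ$ conflicts occur inside blocks corresponding to non-origin leaves. The paper carries out exactly the gadget design you flag as the main obstacle, using explicit $9\times 9$ blocks (and a surrounding border strip) displayed in figures; note that because the gadgets must encode edge \emph{direction}, you will need more templates than your count of ``six pass-through and four leaf'' shapes suggests---the paper's library has a separate block for each directed local pattern.
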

\begin{proof}

We build a polynomial reduction from $\RLEAFD$ to $\ARROWS$ using the following gadgets. 
Let $G = \langle V_{2^k}, E \rangle$ be a graph from the defition of $\RLEAFD$. We build the following grid as it is shown on 
Figure~\ref{pic:border-gadget}. Empty rectangle on the figure contains $2^k \times 2^k$ blocks, where every block is a $9 \times 9$ square.
 We fill block $(i, j)$ according the information about vertex $(i, j)$ in graph $G$ and edges that are incident to $(i,j)$ 
(see Figure~\ref{pic:block-gadget}). 

    \begin{figure}[h]
		\center{\scalebox{0.6}{\vspace{0.3cm}
{
    \tikzstyle{lw}=[very thick, line width=0.5mm]
    \tikzstyle{arr}=[xshift=7, yshift=7]
    \newcommand{\darr}[3]{\node[arr] at (#1 / 2, #2 / 2) {$#3$}}
    \newcommand{\darrl}[4]{\darr{0}{#1}{#2};\darr{1}{#1}{#3};\darr{2}{#1}{#4};}
    \newcommand{\darrs}[5]{\darr{0}{#1}{#2};\darr{1}{#1}{#3};\darr{2}{#1}{#4};\darr{3}{#1}{#5};}

    \begin{tikzpicture}[scale=1]
        \def\W{18}
        \def\H{18}
        \begin{scope}
            \draw[step=.5cm,style=help lines] (0, 0) grid (13 * 0.5 + 0.6, 11 * 0.5 + 0.6);
            \draw[step=.5cm,style=help lines] (14 * 0.5 + 0.9, 12 * 0.5 + 0.9) grid (\W * 0.5 + 0.5, \H * 0.5 + 0.5);
            \draw[step=.5cm,style=help lines] (0, 12 * 0.5 + 0.9) grid (13 * 0.5 + 0.6, \H * 0.5 + 0.5);
            \draw[step=.5cm,style=help lines] (14 * 0.5 + 0.9, 0) grid (\W * 0.5 + 0.5, 11 * 0.5 + 0.6);

            \draw[line width=0.5mm] (0, 0.5) rectangle (2, 5);
            
            \darr{0}{\H}{\ard}; \darr{1}{\H}{\ad};\darr{2}{\H}{\ald};
            \foreach \i in {3,...,13,16,17,18} {
                \darr{\i}{\H}{\ald};
            }

            \foreach \i in {9,10,11,17,14,15,16} {
                \darr{0}{\i}{\ard}; \darr{1}{\i}{\ad};\darr{2}{\i}{\ald};\darr{3}{\i}{\al};
            }
            \foreach \i in {1,...,11,17, 14, 15, 16} {
                \darr{\W}{\i}{\al};
            }
            \foreach \i in {3,...,13,17,16} {
                \darr{\i}{0}{\alu};
            }
            
            \foreach \i in {0,...,13, 18, 16, 17} {
                \darr{\i}{12.5}{\cdots};
            }

            \foreach \i in {0,...,18} {
                \darr{14.5}{\i}{\cdots};
            }

            \darrl{0}{\aru}{\au}{\alu};
            \darrs{1}{\aru}{\au}{\alu}{\al};            
            
            \darrs{8}{\ard}{\ad}{\ald}{\ald};
            \darrs{7}{\ard}{\ad}{\ad}{\ad};
            \darrs{6}{\ard}{\ard}{\ard}{\ard};
            \darrs{5}{\ar}{\ar}{\ar}{\ar};
            \darrs{4}{\aru}{\aru}{\aru}{\aru};
            \darrs{3}{\aru}{\au}{\au}{\au};
            \darrs{2}{\aru}{\au}{\alu}{\alu};

            \darr{\W}{0}{\alu};            
            
            \draw[lw] (0, 0.5) -- (7.1, 0.5);
            \draw[lw] (7.9, 0.5) -- (9,0.5);
            \draw[lw] (0, 5) -- (7.1, 5);
            \draw[lw] (7.9, 5) -- (9, 5);
            \draw[lw] (2, 9) -- (7.1, 9);
            \draw[lw] (7.9, 9) -- (9, 9);

            \draw[lw] (2, 0.5) -- (2, 6.1); \draw[lw] (2, 6.9) -- (2, 9);
            \draw[lw] (6.5, 0.5) -- (6.5, 6.1); \draw[lw] (6.5, 6.9) -- (6.5, 9);
            \draw[lw] (9, 0.5) -- (9, 6.1); \draw[lw] (9, 6.9) -- (9, 9);

            \draw (4.25, 2.75) node[rotate=45] {\bf \large Block 1, 1};

        \end{scope}
        
    \end{tikzpicture}
}
\vspace{0.3cm}}}
		\caption{Border gadget \label{pic:border-gadget}}
    \end{figure}

    \begin{figure}[h]
		\center{\scalebox{0.6}{\input{pics/a-blocks.tex}}}
		\caption{Block gadget \label{pic:block-gadget}}
    \end{figure}

If the degree of the node is zero, then each cell in the block contains $\al$. 
Otherwise we build a path of $\ar$ that goes through the center of the block in direction of the corresponding path in graph $G$ (see
Figure~\ref{pic:example}).
 
    \begin{figure}[h]
        \center{\scalebox{0.6}{\input{pics/a-example.tex}}}
        \caption{Example \label{pic:example}}
    \end{figure}

It is easy to see that there are no any conflicts on the borders of the blocks and with the border gadget. 
So the only places with conflicts are ends of the paths in graph $G$. 
A value in any cell can be computed in polynomial time using a description of a polynomial-time Turing Machine
 from the instance of $\RLEAFD$; it is also easy to compute in polynomial time a leaf in $\RLEAFD$ from the coordinates 
of the conflict in $\ARROWS$. 

\end{proof}

The proof of the next theorem follows from Lemma~\ref{arrows-ppad} and \ref{arrows-ppad-hard}.

\begin{theorem}
$\ARROWS$ is $\PPAD$-complete.
\end{theorem}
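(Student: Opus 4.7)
The plan is to simply invoke the two preceding lemmas. By definition, a problem is $\PPAD$-complete if it lies in $\PPAD$ and every problem in $\PPAD$ reduces to it in polynomial time. So the task reduces to combining membership and hardness, both of which have already been proved in the lemmas immediately preceding the theorem.

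First, I would note that Lemma~\ref{arrows-ppad} establishes membership: it gives a polynomial reduction from $\ARROWS$ to $\BROUWERD$ (by wrapping the arrow assignment in the boundary gadget of Figure~\ref{pic:bound-gadget} and mapping the eight arrow directions to the three colors $\{0,1,2\}$ so that a trichromatic $2\times 2$ square is always witnessed by an adjacent pair of arrows differing by more than $45^\circ$). Since $\BROUWERD\in\PPAD$ by the result of Chen and Deng, and $\PPAD$ is closed under polynomial-time reductions, we conclude $\ARROWS \in \PPAD$.

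Second, I would invoke Lemma~\ref{arrows-ppad-hard}, which supplies a polynomial reduction from $\RLEAFD$ to $\ARROWS$ by assembling the block gadgets of Figure~\ref{pic:block-gadget} inside the border gadget of Figure~\ref{pic:border-gadget}. Since $\RLEAFD$ is $\PPAD$-complete (again by Chen and Deng), any $\PPAD$ problem reduces in polynomial time to $\RLEAFD$ and therefore, by composition, to $\ARROWS$. Hence $\ARROWS$ is $\PPAD$-hard.

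Putting these two facts together yields the theorem. There is no real obstacle here: all of the geometric and combinatorial content has been handled inside Lemmas~\ref{arrows-ppad} and~\ref{arrows-ppad-hard}, and the only thing left is to cite the standard definition of $\PPAD$-completeness. The single point worth double-checking is that both reductions are indeed polynomial-time computable from the Turing machine descriptions supplied in the respective input instances, which was already verified when each gadget was described.
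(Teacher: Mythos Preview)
Your proposal is correct and matches the paper's own proof, which simply states that the theorem follows from Lemma~\ref{arrows-ppad} and Lemma~\ref{arrows-ppad-hard}. You have merely spelled out what the paper leaves implicit, namely the closure of $\PPAD$ under polynomial-time reductions and the composition of reductions through $\RLEAFD$.
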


\section*{Acknowledgements}
The authors are grateful to Alexander Shen for fruitfull discussions and the statement of the problem and also thank 
Mikhail Slabodkin for helpfull comments.

\bibliography{main}
\bibliographystyle{plain}

\newpage\appendix
\section{Proofs from Section~\ref{sectarrows}}
\label{appendixarrow}

The unsatisfiability of $\psi_n$ follows from Brower's theorem, but we give an alternative proof that will be used
in proving upper bound on the size of a contradiction search tree.

Let us consider arrows in two adjacent by edge cells and suppose the first arrow should be rotate on $x$ degree in the clockwise way to get the second one. We also suppose that $x$ is minimal in absolute value angle. The rotation between two arrows is signed $x$.

Now let us consider a closed path on the board, that goes through cells with common edges. 
We go along the path and calculate the sum of rotations between neughbouring arrows. Since the path returns back to the initial cell, the total rotation is divisible by $360^\circ$. 

\begin{lemma} \label{boundary}
Consider a rectangle, with arrows in its cells, such that the total rotation on the closed path going along the 
boundary of the rectangle is nonzero. Then there are two cells with the common edge such that the angle between arrows in them
is more than $45^\circ$.
\end{lemma}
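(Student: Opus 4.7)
I plan to prove the contrapositive: assuming every pair of adjacent cells in the rectangle $R$ has arrows differing by at most $45^\circ$, I will show that the total rotation around $\partial R$ is $0$, contradicting the hypothesis.

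The first step is a purely local claim: for any $2 \times 2$ sub-block $S$ of cells contained in $R$, the total rotation around its boundary cycle is $0$. Indeed, that cycle consists of four transitions, each of absolute value at most $45^\circ$ by hypothesis, so the sum has absolute value at most $180^\circ$. As noted above the lemma, the rotation around any closed path is a multiple of $360^\circ$, so in this range it must equal exactly $0$.

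The second step is a discrete analog of Green's theorem relating $\partial R$ to the $(a-1)(b-1)$ sub-blocks $S_{ij}$ of an $a \times b$ rectangle $R$. Orient all sub-blocks consistently with $\partial R$ (say clockwise). I claim $\partial R = \sum_{i, j} \partial S_{ij}$ as oriented $1$-chains in the cell-adjacency graph: every edge on $\partial R$ lies on the boundary of exactly one sub-block, with matching orientation, while every edge interior to $R$ lies on the boundaries of exactly two sub-blocks, where it is traversed with opposite orientations and therefore cancels in the sum. Applying the signed rotation function $\omega$, which is antisymmetric under edge reversal, together with the first step, gives
\[
\oint_{\partial R}\omega \;=\; \sum_{i, j} \oint_{\partial S_{ij}}\omega \;=\; 0,
\]
contradicting the hypothesis that the rotation around $\partial R$ is nonzero.

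The main obstacle I anticipate is the edge-counting argument in the second step, which requires checking that the clockwise boundaries of two adjacent $2 \times 2$ sub-blocks traverse their common edge in opposite directions. This is a short case analysis on horizontal versus vertical interior edges. An alternative that sidesteps the bookkeeping is to induct on the area of $R$: split $R$ by a straight horizontal or vertical cut into two smaller sub-rectangles $R_1, R_2$, observe that the cut edges appear with opposite orientations in $\partial R_1$ and $\partial R_2$ and hence cancel in $\partial R_1 + \partial R_2$, apply the inductive hypothesis to both pieces, and use the $2 \times 2$ case from the first step as the base.
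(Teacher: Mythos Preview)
Your proof is correct. The paper takes precisely the inductive-splitting route you sketch as an alternative at the end: it argues the direct statement (not the contrapositive) by cutting $R$ into two sub-rectangles $R_1,R_2$, noting that the boundary rotations satisfy $S_1+S_2=S$ so at least one is nonzero, and recursing down to the $2\times 2$ base case. Your main Green's-theorem argument, summing the oriented boundaries of all $(a-1)(b-1)$ unit $2\times 2$ blocks simultaneously, is a clean global version of the same cancellation idea and avoids the induction entirely. One advantage of the paper's halving formulation is that it is constructive: it actually \emph{locates} a violating pair by binary search, and the paper immediately reuses this in the next corollary to build a contradiction search tree of depth $O(n)$ for $\psi_n$.
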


\begin{proof}
We give the proof by induction on the size of rectangle. The base of the induction is $2\times 2$ rectangle.
Consider rectangle $R$ with total rotation along the boundary $S$. We divide rectangle $R$ on two parts as it is shown on Figure~\ref{arrow-rotate}.
Let $S_1$ and $S_2$ be total rotation along the boundary of smaller rectangles $R_1$ and $R_2$. Note that
   $S_1 + S_2 = S$. Hence for one of the smaller rectangle the total rotation is nonzero.
   
    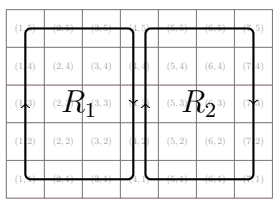
\begin{figure}[h]
        \center{\vspace{0.3cm}
\begin{center}


  \begin{tikzpicture}[scale=1]
    \draw[step=0.5cm, draw=gray, thin,fill=red] (0, 0) grid (3.5, 2.5);
    \foreach \x in {1,...,7} {
      \foreach \y in {1,...,5} {
        \node[scale=0.3,color=gray!80!white] at (-0.25 + \x * 0.5, -0.25 + \y * 0.5) {$(\x, \y)$};
      }
    } 
    \draw[shift={(0.25, 0.25)},rounded corners=2pt,thick] (0, 0) rectangle node {$R_1$} (1.42, 2.0) ;
    \draw[shift={(1.75, 0.25)},rounded corners=2pt,thick] (0.08, 0) rectangle node {$R_2$}  (1.5, 2.0);
    \draw[->] (0.25, 1.25) to (0.25, 1.26);
    \draw[->] (0.25 + 1.42, 1.25) to (0.25 + 1.42, 1.24);
    \draw[->] (0.25 + 1.5 + 0.08, 1.25) to (0.25 + 1.5 + 0.08, 1.26);
    \draw[->] (0.25 + 1.5 + 0.08 + 1.42, 1.25) to (0.25 + 1.5 + 0.08 + 1.42, 1.24);
  \end{tikzpicture}
\end{center}
\vspace{0.3cm}}
        \caption{Division of $R$ into $R_1$ and $R_2$. \label{arrow-rotate}}
    \end{figure}
    
\end{proof}

\begin{theorem}
    $\CSP$ formula $\psi_n$ is unsatisfiable. 
\end{theorem}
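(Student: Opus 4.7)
My plan is to apply Lemma~\ref{boundary} with the rectangle taken to be the entire $n\times n$ square. If $\psi_n$ has a satisfying assignment, every pair of edge-adjacent cells rotates by at most $45^\circ$, so the contrapositive of Lemma~\ref{boundary} forces the total signed rotation along the boundary of the whole square to vanish. Hence it suffices to show that any assignment obeying the ``not pointing outside'' boundary conditions has nonzero total rotation around the boundary; this yields the desired contradiction.

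To compute the boundary rotation I identify the eight arrows with angles ($\ar=0^\circ$, $\aru=45^\circ$, $\au=90^\circ$, $\alu=135^\circ$, $\al=180^\circ$, $\ald=225^\circ$, $\ad=270^\circ$, $\ard=315^\circ$), traverse the boundary counterclockwise starting at the bottom-left corner $BL$, and lift the resulting sequence of angles to $\mathbb{R}$ by taking each increment to be the signed minimum rotation, which lies in $[-45^\circ,45^\circ]$ under our hypothesis. The ``not pointing outside'' condition reads: along the bottom row the angle lies in the closed upper half-circle $[0^\circ,180^\circ]$; along the right column in the left half-circle $[90^\circ,270^\circ]$; along the top in $[180^\circ,360^\circ]$; along the left in $[-90^\circ,90^\circ]$. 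At each of the four corners only three arrows survive, giving the admissible corner sets $\{0^\circ,45^\circ,90^\circ\}$ at $BL$, $\{90^\circ,135^\circ,180^\circ\}$ at $BR$, $\{180^\circ,225^\circ,270^\circ\}$ at $TR$, and $\{270^\circ,315^\circ,360^\circ\}$ at the top-left corner $TL$.

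The crux is a ``no level-jump'' observation: on each side of the square, consecutive copies of the relevant half-circle are separated by a gap of $180^\circ$ on the lifted axis, whereas every lift increment is at most $45^\circ$, so throughout that entire side the lift stays inside a single copy of the half-circle. Normalizing the starting lift to $\tilde\theta_{BL}\in\{0^\circ,45^\circ,90^\circ\}$ and iterating this trapping argument side by side, I conclude in turn that $\tilde\theta_{BR}\in\{90^\circ,135^\circ,180^\circ\}$, $\tilde\theta_{TR}\in\{180^\circ,225^\circ,270^\circ\}$, $\tilde\theta_{TL}\in\{270^\circ,315^\circ,360^\circ\}$, and that the final value after returning to $BL$ lies in $\{360^\circ,405^\circ,450^\circ\}$. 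Subtracting, the total rotation lies in $[270^\circ,450^\circ]$ and is an integer multiple of $360^\circ$, so it must equal exactly $+360^\circ$. Feeding this nonzero value into Lemma~\ref{boundary} produces two edge-adjacent cells whose arrows differ by more than $45^\circ$, contradicting the adjacency constraint of $\psi_n$.

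The main subtle point will be precisely the no-level-jump step, which reduces the global winding computation to tracking where the lift sits within a single half-circle on each side; once that is established, the staircase of admissible corner values and the final contradiction follow mechanically.
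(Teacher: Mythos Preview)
Your proof is correct and follows essentially the same approach as the paper: assume satisfiability, show the total signed rotation around the boundary equals $360^\circ$, and invoke Lemma~\ref{boundary} for the contradiction. The only cosmetic difference is granularity---the paper splits the boundary at two diagonally opposite corners ($A=TL$, $B=BR$) and argues that each half contributes rotation strictly between $0^\circ$ and $360^\circ$ because the forbidden arrow $\aru$ blocks the lift from wrapping, whereas you split at all four corners and trap the lift in a $180^\circ$ half-circle on each side; both devices serve the same purpose of preventing the lift from jumping a full period.
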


\begin{proof}
We prove that the total rotation along the boundary of $n \times n$ sqaure is $360^\circ$. 
Consider upper left cell $A$ and lower right cell $B$. 
Let the total rotation along the upper path from $A$ to $B$ is $l + 360^\circ k$, where $0 \le l < 360^\circ, k \in \mathbb{Z}$. 
An arrow in the cell $A$ belongs to set $\{\rightarrow, \searrow,
    \downarrow\}$ and one in the cell $B$ belongs to set $\{\leftarrow, \nwarrow, \uparrow\}$, therefore $l \neq 0$. 
Note that if $k \neq 0$, then on the path from $A$ to $B$ there exists $\aru$ that contradicts to the boundary constraints. 
So we have that the total rotation from $A$ to $B$ is positive and less than $360^\circ$. 
Similarly the total rotation along lower path from $B$ to $A$ is positive and less than $360^\circ$. 
Thus the total rotation along the closed path is positive and less than $720^\circ$. But the total rotation should be 
divisible by $360^\circ$, therefore it equals $360^\circ$.

The Theorem follows from Lemma~\ref{boundary}.
\end{proof}

\begin{corollary}
    For the $\CSP$ formula $\psi_n$ there exists a contradiction search tree of depth $O(n)$ and therefore of size $2^{O(n)}$.
\end{corollary}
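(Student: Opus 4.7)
My plan is to convert the inductive argument in the proof of Theorem~\ref{arrow-impossible} directly into an adaptive querying procedure, which is equivalent to a contradiction search tree. The algorithm will maintain as an invariant a sub-rectangle $R$ of the $n \times n$ board such that (i) every cell on the boundary $\partial R$ has already been queried and (ii) the total signed rotation along $\partial R$ is nonzero. Initially $R$ is the whole square; the algorithm queries all $4n - 4$ boundary cells (terminating immediately if a boundary constraint is found to be falsified), and the boundary constraints then force the total rotation along $\partial R$ to be $360^\circ$, establishing the invariant.

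At each subsequent step, given $R$ with dimensions $a \times b$ and $a \ge b$, the algorithm queries the cells on a middle column (or row, when $b \ge a$) that cuts $R$ into two halves $R_1, R_2$. Because the endpoints of that middle line already lie on $\partial R$, this requires at most $b - 2$ new queries. The rotations along $\partial R_1$ and $\partial R_2$ sum to the rotation along $\partial R$, so at least one of them is nonzero; the algorithm recurses on that half, which satisfies the invariant. Whenever two adjacent queried cells are discovered whose arrows differ by more than $45^\circ$, or a queried boundary cell points outward, the algorithm halts at a leaf labelled by that falsified constraint. Lemma~\ref{boundary} guarantees that by the time $R$ shrinks to constant size (say $2 \times 2$), such a violating pair must have been seen, so every root-to-leaf path ends in a falsified constraint.

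The depth bound follows from a geometric sum. The initial boundary query contributes $4n - 4$. Thereafter, alternately halving the two dimensions gives middle-line costs of roughly $n, n/2, n/2, n/4, n/4, n/8, n/8, \ldots$, summing to $O(n)$. Hence along any root-to-leaf path the algorithm makes $O(n)$ queries, so the resulting $8$-ary contradiction search tree has depth $O(n)$ and size at most $8^{O(n)} = 2^{O(n)}$.

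The only delicate point is ensuring that after querying the middle line the algorithm can in fact decide which half to recurse on without further queries; this is exactly what invariant~(i) provides, since both $\partial R_1$ and $\partial R_2$ consist entirely of cells on $\partial R$ or on the newly-queried middle line. The base case is handled automatically: a $2 \times 2$ sub-rectangle with nonzero boundary rotation must, by Lemma~\ref{boundary}, contain an edge where two adjacent arrows differ by more than $45^\circ$, and such an edge is visible because all four of its cells have already been queried.
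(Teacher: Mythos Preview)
Your proof is correct and follows essentially the same bisection strategy as the paper: query the boundary, then repeatedly cut the current rectangle along a middle line, using the additivity of boundary rotations (the content of Lemma~\ref{boundary}) to recurse into a half with nonzero rotation. Your write-up is in fact more careful than the paper's about stating the invariant, handling early termination on a falsified boundary constraint, and treating the $2\times 2$ base case, but the underlying argument and the $O(n)$ depth count are the same.
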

\begin{proof}
Make a request to all boundary cells. After it we split a square into two approximate equal parts by a column.
Since the total rotation  on the boundary of two parts is nonzero, then by Lemma~~\ref{boundary} one of these parts 
has nonzero rotation and therefore contains a contradiction. We split the contradictory part by a row and 
reduce the problem of a contradiction search to a rectangle of size at most $(\frac{n}{2} + 1)\times (\frac{n}{2} + 1)$ with known 
arrows on the boundary. So we make approximately $1.5 n$ requests and reduce the problem to two times smaller problem. 
The depth of the resulting tree is $O(n)$.
\end{proof}

\end{document}